\newcommand {\Neg}{\mathscr N}
\newcommand {\W}{W_{_\succ}}
\newcommand {\K}{\mathcal K}
\newcommand {\Ca}{\mathscr C}
\newcommand {\CapiA}{\mathscr C(\pi,\A)}
\newcommand {\Xa}{\mathscr X}
\newcommand {\Meas}{\mathscr M}    
\DeclareMathOperator{\Ext}{Ext}
\DeclareMathOperator{\Int}{Int}
\newcommand{\ma}{\mathfrak m}
\newcommand{\na}{\mathfrak m^c}
\begin{document} 
\title[Market Completeness]
{Complete and competitive financial markets in 
a complex world}

\mybic
\date \today 

\keywords{
Competitive completion;
Market completeness;
Market power;
No arbitrage;
Sublinear pricing.
}

\begin{abstract} 
We investigate the possibility of completing financial
markets in a model with no exogenous probability
measure, with market imperfections and with an
arbitrary sample space. We also consider whether 
such extension may be possible in a competitive 
environment. Our conclusions highlight the economic
role of complexity.

JEL Classification: G10, G12.
\end{abstract}

\thanks{
I am grateful to the participants at SAET 2019 Conference, 
Ischia, for comments and suggestions.
}

\maketitle

\section{Introduction.}
Since the seminal contributions of Arrow \cite{arrow} 
and of Radner \cite{radner}, market completeness and 
the no arbitrage principle have played a prominent r\^ole 
in financial economics. Market completeness, as first 
noted by Arrow, is a crucial property as it permits the 
optimal allocation of risk bearing among risk averse 
agents. In fact the equilibria of an economy under 
conditions of uncertainty but with competitive and 
complete financial markets are equivalent to those of 
an ordinary static economy so that classical welfare 
theorems apply. The equilibrium analysis on which this 
conclusion rests requires that financial markets are 
free of arbitrage opportunities. In the present paper 
we consider the validity of these classical results in 
a complex world, that is in the context of an economic 
model in which uncertainty is treated as a completely 
general and unrestricted phenomenon. As a general
conclusion we find that, in a complex world the 
interplay between uncertainty and asset prices is
richer and more interesting than expected.

Indeed the model of Arrow and of Radner, and of much 
of the following literature on general equilibrium theory 
with financial markets, focuses on an economy with a 
finite state space. This modelling choice is instrumental 
to and has its main advantage in the description of 
financial assets as contingent contracts, i.e. in purely
economic terms. On the other hand, it has the drawback
that this simplified representation of uncertainty makes 
it more challenging, in the absence of further elements, 
to justify market incompleteness. To the other extreme, 
in infinite dimensional economic theory (see e.g. the 
review of Mas-Colell and Zame \cite{mas-colell_zame}) 
commodities and assets need to be defined as elements
of some given space, the choice of which is most often 
a first step of crucial importance.

In the approach we propose hereafter we retain Arrow's
original idea that assets should be described solely in 
terms of the rights and obligations of the two counterparties
on the occurrence of each future contingency and yet 
allow for an arbitrary sample space $\Omega$. In particular, 
$\Omega$ will not be endowed with any special structure 
and the real valued functions $X$ defined thereon, which 
describe assets payoffs, need not be continuous nor 
measurable in any specific sense. In fact, following the 
thread of \cite{MAFI_2008} and \cite{ECTH_2017}, we 
do not assume the existence of any exogenously given 
probability measure, a major difference with a large 
part of the financial literature as well as many important 
papers in equilibrium theory, such as Bewley \cite{bewley}. 
Our starting point is rather a criterion of economic rationality 
which describes what all agents agree upon when saying 
that ``$f$ is more than $g$'' (this modelling of rationality, 
first introduced in \cite{ECTH_2017}, is referred to as 
{\it common order} in \cite{burzoni_riedel_soner}). 

We believe that our general framework is indeed the
natural setting for assuming incomplete markets and
to investigate the main aspects of a process of gradual 
completion of markets. In particular, we address the 
following questions:
\tiref a
can an incomplete set of financial markets be extended 
to a complete one while preserving the basic economic
principle of absence of arbitrage opportunities?
\tiref b
if so, can such an extension be supported by a
competitive market mechanism?

Our finding is that neither question need have a
positive answer. Concerning the former, we observe 
that competition on financial markets may in principle 
produce two distinct outcomes. On the one hand 
it lowers margins on currently traded assets and 
results thus in lower prices. On the other hand, 
competition involves the design and issuance of 
new securities. We argue that lower prices on the 
existing securities may destroy the possibility to 
obtain complete markets free of arbitrage opportunities 
in much the same way as predatory pricing in a 
monopolistic setting may prevent the entry of new, 
potential competitors. In principle the net effect of 
competition on collective welfare may be unclear. 
Relatively to the latter question, we show that the 
completion of financial markets in respect of the no 
arbitrage principle may not be possible under linear 
pricing (which we take as synonymous of perfect 
competition). We actually provide an explicit example 
from which it clearly emerges that this negative 
conclusion has to do with the complexity of the 
economic environment described in the model. 
Indeed, most economic models treat economic 
complexity under probabilistic assumptions which 
do not permit a clear comprehension of this phenomenon. 
On the other hand, we show that if a no arbitrage 
extension of markets with a limited degree of market 
power is possible, then markets admit a perfectly 
competitive extension as well.

We should make clear that, although it is indeed
natural and appropriate on a general ground, to
interpret the extension of markets as the effect
of financial innovation, an explicit model of the 
strategic behaviour of intermediaries, such as Allen 
and Gale \cite{allen_gale} or Bisin \cite{bisin}, is 
beyond the scope of the present work. We rather 
study the properties of pricing functions described 
as a sublinear functional on the space of traded 
assets' payoffs. The non linearity of prices captures 
the non competitive nature of financial markets as 
well as the role of other market imperfections.


In recent years there have been several papers in which 
the assumption of a given reference probability is relaxed, 
if not abandoned. Riedel \cite{riedel} (and more recently 
Burzoni, Riedel and Soner \cite{burzoni_riedel_soner}) 
suggests that an alternative approach to finance should 
be based on the concept of Knightian uncertainty. A typical 
implication of this approach is that a multiplicity of 
probability priors is given -- rather than a single one. 
Some authors, including Bouchard and Nutz 
\cite{bouchard_nutz}, interpret this multiplicity as an 
indication of {\it model uncertainty}, a situation in which 
each prior probability corresponds to a different model 
that possesses all the traditional properties but in which
it is unknown which of the models should be considered
the correct one. An exemplification is the paper by Epstein 
and Ji \cite{epstein_ji} in which model uncertainty 
translates into ambiguity concerning the volatility
parameter. Other papers, among which the ones by 
Davis and Hobson \cite{davis_hobson} and by Acciaio 
et al \cite{acciaio_et_al}, take the sample space to 
consist of all of the trajectories of some underlying asset
and study the prices of options written thereon based on 
a path by path or {\it model-free} definition of arbitrage.

In section \ref{sec economy} we describe the model in
all details, we introduce the notion of arbitrage and 
prove some properties of prices. In section \ref{sec 
pricing measures} we characterize the set of pricing 
probability charges which is of crucial importance in 
our construction. In the following section \ref{sec 
complete} we prove the first of our main results, Theorem 
\ref{th NFLVR}, in which the possibility of completing 
markets is fully characterized. We show that in a complete 
financial market, although prices may in principle contain 
bubbles, there cannot be assets with a positive price but 
no intrinsic value. Theorem \ref{th NFLVR} provides an 
answer to question \tiref a above. In section \ref{sec 
viable} we examine viability of financial prices which,
in our setting, turns out to be a stronger property than 
the extension property. In section \ref{sec competition} 
we establish a second fundamental result, Theorem \ref{th NFM}, 
which answers question \tiref b. It offers a complete 
characterization of the existence of a fully competitive 
completion of financial markets. In particular we prove 
that a competitive completion is possible if only one 
can obtain a completion in which the market power of 
intermediaries is limited. This result can also be read 
as a theoretical justification of the microstructure formula 
stating that asset prices are obtained by applying a 
spread on the asset fundamental value. Several additional 
implications are proved. We provide an explicit example 
of a financial market that admits no competitive completions 
as a consequence of a high degree of complexity. Given 
its importance in the reference literature, in section 
\ref{sec c.a.} we examine the question of countable 
additivity and give exact necessary and sufficient
conditions for the existence of a countably additive
pricing probability.

\section{The Economy.}
\label{sec economy}

We model the market as a triple, $(\Xa,\ge_*,\pi)$, in 
which $\Xa$ describes the set of payoffs generated
by the traded assets, $\ge_*$ the criterion of collective 
rationality used in the evaluation of investment projects
and $\pi$ is the price of each asset as a function of its 
payoff. Each of these elements will now be described 
in detail.

Before getting to the model we introduce some useful
notation.
Throughout, $\Omega$ will be an arbitrary, non empty 
set that we interpret as the sample space so that the 
family $\Fun\Omega$ of real valued functions on 
$\Omega$ will be our ambient space; $\B(\Omega)$ 
will represent bounded functions. If $A\subset\Fun\Omega$, 
we write $\cl[u]A$ (resp. $\cl[\tau]A$) to denote its 
closure in the topology of uniform distance (resp. in 
the topology $\tau$). If $\A$ is a $\sigma$ algebra 
of subsets of $\Omega$, $\Fun{\Omega,\A}$ denotes 
the family of $\A$ measurable functions and we set
$\B(\Omega,\A)=\Fun{\Omega,\A}\cap\B(\Omega)$.
The
symbol $ba(\A)$ denotes the set of bounded, finitely 
additive set functions on $\A$, to which we refer as 
{\it charges}, while $\Prob_{ba}(\A)$ designates the 
collection of probability charges. We reserve the word 
probability with no further qualification and the symbol 
$\Prob(\A)$ to classical (i.e. countably additive) probabilities. 
General references for the theory of charges and 
finitely additive integrals are \cite{bible} and \cite{rao}.

\subsection{Economic Rationality}
\label{subsec	rationality}
A natural order to assign to $\Fun\Omega$ is pointwise
order, to wit 
$f(\omega)
	\ge 
g(\omega)$ 
for all $\omega\in\Omega$, also written as $f\ge g$. The 
lattice symbols $\abs f$ or $f^+$ will always refer to such 
order.

Natural as it may appear, pointwise order is not an 
adequate description of how economic agents rank 
random quantities save when the underlying sample 
space is particularly simple, such as a finite set. For 
example, it is well documented that investors base 
their decisions on a rather incomplete assessment of 
the potential losses arising from the selected portfolios, 
exhibiting a sort of asymmetric attention that leads 
them to neglect some scenarios, in contrast with a 
pointwise ranking of investment projects%
\footnote{
See \cite{MAFI_2008} for a short discussion of some 
inattention phenomena relevant for financial decisions.
}.
In a complex world, in which the attempt to formulate
a detailed description of $\Omega$ is out of reach,
rational inattention is just one possible approach to
deal with complexity, possibly not too different from
the restriction to measurable quantities adopted by
classical probability. 

In this paper, following the thread of \cite{ECTH_2017}, 
we treat monotonicity as a primitive economic notion 
represented by a further transitive, reflexive binary 
relation on $\Fun\Omega$. To distinguish it from
the pointwise order $\ge$, we use the symbol $\ge_*$
(the asymmetric part of $\ge_*$ will be written as
$>_*$).

We assume the following properties:
\begin{subequations}
\label{ge}
\begin{equation}
\label{ge mon}
(i).\quad
f\ge g
\qtext{implies}
f\ge_* g
\qtext{and}
(ii).\quad
t\in\R_{++}
\qtext{implies}
f+t>_*f,
\end{equation}
\begin{equation}
\label{ge affine}
\qtext{if}
f\ge_* g
\qtext{then}
\lambda f+h\ge_*\lambda g+h,
\qquad
\lambda\in\B(\Omega)_+, h\in\Fun\Omega
\end{equation}
\begin{equation}
\label{ge trunc}
\qtext{if}
f>_*0
\qtext{then}
f\wedge 1>_*0
\end{equation}
\end{subequations}
which will be the basis for what follows%
\footnote{
The first paper to treat monotonicity in an axiomatic way 
was, of course, Kreps \cite{kreps}. In \cite{ECTH_2017} 
axioms similar to \eqref{ge} were first introduced and
it was show that a ranking with such properties may 
equivalently be deduced from a cash additive risk 
measure. A much similar approach to ours has later
been adopted in the recent paper by Burzoni et al 
\cite{burzoni_riedel_soner}. 
}.

Of course, the symmetric part of $\ge_*$ induces 
a corresponding equivalence relation, $\sim_*$. It 
will be useful to remark that if $f\ge_*0$ then, by 
property \eqref{ge affine}, $f\sset{f\le0}\ge_*0$ so 
that 
$f
	\ge_*
f-2f\sset{f\le0}
	=
\abs f$,
i.e. $\abs f\sim_*f$. Based on $\sim_*$ we can 
also construct the collection of {\it negligible sets}
\begin{equation}
\label{Neg}
\Neg_*
	=
\{A\subset\Omega:0\sim_*\set A\}
\end{equation}
and, if $\A$ is an arbitrary $\sigma$ algebra of subsets 
of $\Omega$ which contains $\Neg_*$, the subset
$\Prob_{ba}(\A,\Neg_*)$ consisting of probability 
charges on $\A$ which vanish on $\Neg_*$. Every 
subset of $\Omega$ not included in $\Neg_*$ will 
be called {\it non negligible}. It will be useful to
define the space
\begin{equation}
\B(\Omega,\A,\Neg_*)
	=
\big\{f\in\Fun{\Omega}:
f\sim_*b\text{ for some }b\in\B(\Omega,\A)\big\}.
\end{equation}

It is immediate to note that any {\it exogenously} 
given probability charge $P\in\Prob_{ba}(\A)$ on
some $\sigma$ algebra $\A$ (countably or finitely 
additive) induces a corresponding ranking defined 
as
\begin{equation}
\label{ge P}
f\ge_{_P} g
\qtext{if and only if}
\sup_{\varepsilon>0}P(f\le g-\varepsilon)=0
\qquad
f,g\in\Fun{\Omega,\A},
\end{equation}
which satisfies the above axioms \eqref{ge}. In 
this case a set is negligible if and only if it is
$P$ null. Clearly,
if $P$ is a probability the ranking $\ge_{_P}$ is just
the $P$-a.s. ranking. The same construction may be
extended by replacing $P$ with a family 
$\Pred\subset\Prob_{ba}(\A)$ of 
probability charges and defining accordingly
\begin{equation}
\label{ge PP}
f\ge_{_\Pred} h
\qtext{if and only if}
f\ge_{_P} h
\qtext{for all}
P\in\Pred.
\end{equation}
The ranking $\ge_{_\Pred}$ defined in \eqref{ge PP} 
arises in connection with the {\it model uncertainty} 
approach mentioned in the Introduction and 
exemplified by the paper by Bouchard and Nutz
\cite{bouchard_nutz}. In this approach each element 
$P$ of the given collection $\Pred$ is a {\it model}%
\footnote{
It should be noted that the choice of Bouchard and 
Nutz to take $\Pred$ to be a set of probabilities has 
considerable implications on $\Neg_*$ which coincides 
with the collection of sets which are $P$ null for every 
$P\in\Pred$ and is therefore closed with respect to 
countable unions. We shall return on the ranking
$\ge_\Pred$ in Example \ref{ex PP}.
}.

We provide some concrete examples of the partial
order $\ge_*$ that arise from decision theory.

\begin{example}
\label{ex collective}
Let $\succeq_\alpha$ represent the preference system 
of agent $\alpha$ defined over the whole of $\Fun\Omega$.
Assume that $\succeq_\alpha$ satisfies the following
monotonicity properties valid for all $f,g\in\Fun\Omega$%
\footnote{
These assumptions are indeed minimal in economic
models with an infinite dimensional commodity space,
see e.g. Bewley \cite[p. 520]{bewley} or Mas-Colell 
\cite[p. 1041]{mas-colell}. 
}:
\begin{equation}
\label{pref a}
(i).\quad
f\ge g
\qtext{implies}
f\succeq_\alpha g
\qtext{and}
(ii).\quad 
t\in\R_{++}
\qtext{implies}
f+t\succ_\alpha f.
\end{equation}
We may deduce an implicit, 
subjective criterion $\ge_\alpha$ of monotonicity (or 
rationality), by letting
\begin{equation}
\label{end}
f\ge_\alpha g
\qtext{if and only if}
\lambda(f-g)+h
\succeq_\alpha h
\qquad
\lambda,h\in\Fun\Omega,\ 
\lambda\ge0.
\end{equation}
A mathematical criterion $\ge_*$ describing collective 
rationality may then be defined as the meet of all such 
individual rankings, i.e. as
\begin{equation}
\label{unanimity}
f\ge_*g
\qtext{if and only if}
f\ge_\alpha g
\qtext{for each agent}
\alpha\in\mathfrak A.
\end{equation}
Notice that in this case $f>_*g$ means that all agents
agree on $f\ge_\alpha g$ and there exists at least
one agent $\alpha_0\in\mathfrak A$ according to 
which $f>_{\alpha_0}g$.
\end{example}

\begin{example}
\label{ex EU}
Let an agent $\alpha$ have expected utility preferences
$U_\alpha(f)
=
\int u_\alpha(f)dP_\alpha$
with $u_\alpha$ a monotonic increasing utility function 
satisfying $u_\alpha(t)>u_\alpha(0)=0$ for each $t>0$. 
Define $\ge_\alpha$ as in the preceding Example 
\ref{ex collective}. Then $f\ge_\alpha g$ implies, by 
\eqref{end}, that
$U_\alpha(0)
	\ge 
U_\alpha((f-g+t)\sset{f-g<-t})
	\ge 
U_\alpha(t\sset{f-g<-t})
	=
u_\alpha(t)P_\alpha(f-g<-t)$
i.e. that $f\ge_{P_{\alpha}}g$. The converse is, of 
course, also true. Thus, $\ge_\alpha$ coincides with
$\ge_{P_\alpha}$. If we aggregate the $\ge_\alpha$
rankings via the unanimity rule \eqref{unanimity},
we obtain thus the ranking $\ge_{\Pred}$ defined
in \eqref{ge PP}, with $\Pred=\{P_\alpha:\alpha\in\mathfrak A\}$.
\end{example}

W shall return on Example \ref{ex EU} in Section 
\ref{sec competition}, Example \ref{ex PP}.

\begin{example}
\label{ex choquet}
In the theory of decision making, several models,
starting with \cite{schmeidler_89}, have treated
the problem of choice under uncertainty by means
of set functions with weak properties, often not 
even additive. Fix to this end 
$\nu:2^\Omega\to[0,1]$ and
define $\ge_\nu$ as in \eqref{ge P}. Then $\ge_\nu$
satisfies properties \eqref{ge} if and only if
(i)
$\nu(\emp)=0$
and
(ii)
$\nu(A)\le\nu(B)$ when $A\subset B$, i.e. when 
$\nu$ is a capacity. Nevertheless, the binary
relation $\ge_\nu$ (which is always reflexive
by virtue of (i)) is transitive if and only if $\nu$ 
satisfies, in addition, the further property
\begin{equation}
\label{null additive}
\nu(A\cup B)=\nu(A)
\qquad
B\subset\Omega,\ \nu(B)=0,
\end{equation}
which is sometimes referred to as null-additivity. A 
strongly subadditive (or submodular) capacity
\cite[III.30]{dellacherie_meyer_A} clearly 
possesses all the above properties while a 
supermodular one need not satisfy 
\eqref{null additive}.
\end{example}

The conclusion of the preceding Example \ref{ex choquet}
can be given a formal statement in the following:

\begin{lemma}
\label{lemma choquet}
A ranking that satisfies property \eqref{ge} originates 
from a set function $\nu:2^\Omega\to[0,1]$ via 
\eqref{ge P} if and only if $\nu$ is a null-additive
capacity.
\end{lemma}

\begin{example}
\label{ex filter}
In behavioural economics and experimental 
psychology it is often argued that decisions
are taken by focusing attention on a restricted
class of possible scenarios to which agents
attach special importance or simply over
which they feel relatively confident or just
more informed. Let $\F$ be a family of non 
empty subsets of $\Omega$, representing
the sets relatively to which the agent is
fully confident. An agent may then rank 
alternatives in accordance with the
following criterion:
\begin{equation}
\label{ge F}
f\ge_\F h
\qtext{if and only if }
\{f-h>-t\}\in\F
\qquad
t>0.
\end{equation}
Again, one easily concludes that $\ge_\F$
possesses the above properties \eqref{ge}
if and only if $\F$ is a filter of subsets of
$\Omega$. In accordance with \cite[I.7.11]{bible} 
let $\F_*$ be some ultrafilter refining $\F$, 
and define $P\in\Prob_{ba}(2^\Omega)$ by letting 
$P(G)=1$ if $G\in\F_*$ or else $P(G)=0$.
Then $f\ge_\F h$ implies $f\ge_{_P}h$;
the converse implication need not be true.
\end{example}

An interesting question raised by the preceding 
examples concerns the conditions under which 
a given ranking $\ge_*$ coincides with the
ranking $\ge_{_P}$ for some {\it endogenous} 
probability charge $P$. Notice that in Example
\ref{ex filter} we only showed that $\ge_{_P}$
is compatible with $\ge_\F$ but we may still 
have that $f\ge_{_P}h$ while $f\not\ge_\F h$.
We shall obtain an indirect answer to this
question in Theorem \ref{th L1}.

\subsection{Assets}
\label{subsec assets}
We posit the existence of an asset whose final payoff 
and current price are used as {\it num\'eraire} of the 
payoff and of the price of all other assets, respectively. 
Each asset is identified with its payoff expressed in units 
of the {\it num\'eraire} and is modelled as an element 
of $\Fun\Omega$. The market is then a convex set 
$\Xa
	\subset 
\Fun\Omega$ 
containing the origin as well as the function identically 
equal to $1$ (that will be simply indicated by $1$). 
Notice that, similarly to Jouini and Kallal
\cite{jouini_kallal_99}, we do not assume that investments 
may be replicated on any arbitrary scale, i.e. that $\Xa$ is 
a {\it convex cone}, as is customary in this literature.
Restrictions to the investment scale are commonly met
on the market whenever investors are subject to margin
requirements or the provision of a collateral.

We assume in addition that 
({\it i})
each $X\in\Xa$ satisfies $X\ge_*a$ for some $a\in\R$ 
and 
({\it ii}) that
\begin{equation}
\label{safe}
X+\lambda\in\Xa
\qtext{for all}
X\in\Xa,\ \lambda\ge0.
\end{equation}
The first of these assumptions constraints the assets 
traded on the market to bear a limited risk of losses
and may be interpreted as a restriction imposed by
some regulator; the second one permits agents, which 
in principle may only form convex  portfolios, to invest 
into the {\it num\'eraire} asset an unlimited amount of 
capital. Notice that, since the {\it num\'eraire} cannot 
be shorted, the construction of zero cost portfolios -- 
or {\it self-financing strategies} -- need not be possible
(outside of the trivial case of assets with a negative price
which are self-financed by definition). 
We do not {\it assume} that the market prohibits short
positions but rather that, in the presence of credit
risk, long and short positions even if permitted should
be regarded as two different investments as they bear
potentially different levels of risk. In other words, when 
taking short positions, investors affect the implicit 
counterparty risk and modify {\it de facto} the final 
payoff of the asset shortened.

\subsubsection{Market Extensions}
\label{subsec complete}

The issuance of new securities may result in the extension 
of the set $\Xa$ of traded assets and, possibly, in the
completion of the existing markets. Let us mention that
in traditional models, in which asset payoffs are modelled
as elements of an appropriate ambient space -- often
a Lebesgue space $L^p(	\Omega,\F,P)$ -- markets are
considered to be complete if each elements of such space
is attained by some traded asset. In our setting, defining
market completeness is more delicate. The idea that all 
elements of $\Fun\Omega$ may be traded is indeed too
ambitious as it would be difficult to define a price function
on such a large domain.

In order to have an appropriate notion of completeness
of markets, we find it convenient to introduce a $\sigma$ 
algebra $\A$ of subsets of $\Omega$ satisfying 
\begin{equation}
\label{A}
\bigcup_{\{f\sim_* X:X\in\Xa\}}
\sigma(f)
\subset\A
\end{equation}
and to define
\begin{equation}
\label{L(X,A)}
L(\Xa,\A)
	=
\big\{g\in\Fun{\Omega,\A}:
\lambda X\ge_*\abs g\text{ for some }\lambda>0,\ 
X\in\Xa\big\}
\end{equation}
which may be loosely interpreted as the set of ($\A$ 
measurable) superhedgeable claims. We then define 
markets to be complete relatively to $\A$ (or $\A$
complete, for brevity) whenever each element of
$L(\Xa,\A)$ is attained by a corresponding asset.
In other words, we define market completeness
{\it conditionally} on the $\sigma$ algebra $\A$
and of course a whole hierarchy of completions 
is possible as $\A$ ranges between the two extrema 
of the minimal $\sigma$ algebra $\A_0$ generated 
by the elements equivalent to some $X\in\Xa$ and 
the power set of $\Omega$. The introduction of a 
measurable structure, which implies no restriction in 
our analysis, serves two distinct purposes: on the one 
hand, it makes the comparison with the traditional
literature fully transparent; on the other hand, it 
permits to put the extension problem in a clearer 
relation with the notion of complexity, that we identify
with the fineness of $\A$. A fully complex model
corresponds then to the case in which $\A$ is the 
power set of $\Omega$, a case perfectly possible
in our setting.

\begin{lemma}
The set $L(\Xa,\A)$ defined in \eqref{L(X,A)} is a vector
sublattice of $\Fun{\Omega,\A}$ containing $\Xa$ and
$\B(\Omega,\A,\Neg_*)$.
\end{lemma}

\begin{proof}
The last inclusion is clear from the definition of
$L(\Xa,\A)$. It is easily seen that $g\in L(\Xa,\A)$ 
if and only if $\abs g\in L(\Xa,\A)$;
moreover, if $\lambda_iX_i\ge_*\abs{g_i}$ and if
$a_i\in\R$ for $i=1,2$ then, upon letting
$\lambda
	=
\abs{a_1}\lambda_1+\abs{a_2}\lambda_2$ 
and
$X
	=
X_1(\abs{a_1}\lambda_1/\lambda)
+
X_2(\abs{a_2}\lambda_2/\lambda)$,
we conclude, from \eqref{ge affine},
\begin{align*}
\lambda X
=
\abs{a_1}\lambda_1 X_1+\abs{a_2}\lambda_2X_2
\ge_*
\abs{a_1}\abs{g_1}+\abs{a_2}\abs{g_2}
\ge
\abs{a_1g_1+a_2g_2}
\end{align*}
so that $g_1,g_2\in L(\Xa,\A)$ and $a_1,a_2\in\R$
imply
$a_1g_1+a_2g_2\in L(\Xa,\A)$. This proves that
$L(\Xa,\A)$ is a vector sublattice of $\Fun{\Omega,\A}$.
If $Y\in\Xa$, then $Y+a\ge_*0$ for some $a\in\R_+$
so that $Y+2a\in\Xa$ and
$Y+2a
	\ge_*
\abs{Y+a}+a
	\ge
\abs Y
$
so that $Y\in L(\Xa,\A)$. 
\end{proof}

When markets are $\A$-complete each element 
$f\in\B(\Omega,\A,\Neg_*)$ corresponds to the
payoff of some asset traded on the market. In the 
special case in which $\ge_*$ is generated by 
some $P\in\Prob(\A)$, market completeness 
implies that all claims with payoff in 
$L^\infty(\Omega,\F,P)$ are attained.

\subsection{Prices.}
\label{subsec prices}
In financial markets with frictions and limitations to 
trade, normalized prices are best modelled as positively 
homogeneous, subadditive functionals of the asset payoff, 
$\pi:\Xa\to\R$, satisfying $\pi(1)=1$, the 
monotonicity condition
\begin{equation}
\label{NA mon}
X,Y\in\Xa,\ X\ge_*Y
\qtext{imply}
\pi(X)\ge\pi(Y)
\end{equation}
and cash additivity%
\footnote{
This property, defined in slightly different terms, is 
discussed at length relatively to risk measures in 
\cite{el-karoui_ravanelli}. In the context of non linear
pricing cash additivity is virtually always assumed in
a much stronger version, namely for all $X\in\Xa$ and
all $a\in\R$, see e.g. \cite[Definition 1]{araujo_et_al}.
}
\begin{equation}
\label{cash add}
\pi(X+a)
	=
\pi(X)+a
\qquad
X\in\Xa,\ 
a\in\R
\qtext{such that}
X+a\in\Xa.
\end{equation}
The non linearity of financial prices is a well known 
empirical feature documented in the microstructure 
literature (see e.g. the exhaustive survey by Biais et al. 
\cite{biais_glosten_spatt}) and essentially accounts 
for the auxiliary services that are purchased when 
investing in an asset, such as liquidity provision and 
inventory services. Subadditivity captures the idea 
that these services are imperfectly divisible and that
they are supplied in conditions of limited competition. 
The degree of non linearity of prices, computed as
\begin{equation}
\label{market power}
\ma(\pi)
	=
\sup_{f_1,\ldots,f_N\in\B(\Omega,\A,\Neg_*)_+\cap\Xa}
\frac{\sum_{i\le N}\pi(f_i)-\pi\left(\sum_{i\le N}f_i\right)}
{\sum_{i\le N}\pi(f_i)},
\end{equation}
will play a major role in section \ref{sec competition}
and it measures the lack of competitiveness on financial 
markets. Although deviations from the competitive
paradigm may originate from several sources, we 
find it convenient to interpret the polar cases 
$\ma(\pi)=0$ and $\ma(\pi)=1$ as indication 
of perfect competition and of full monopoly power, 
respectively.

Cash additivity implicitly assumes that the {\it num\'eraire} 
is traded and priced on a separate market -- as is normally 
the case with treasury bonds, but not necessarily so with 
other assets. Concerning positive homogeneity, this 
property is consistent with market imperfections such 
as the bid ask spread, but contrasts in general with fixed 
costs. In \cite{jouini_kallal_99} prices are described by a 
{\it convex} functional, thus not necessarily positively 
homogeneous%
\footnote{
However in \cite[Assumption 2.1]{jouini_kallal_99}
the assumption of monotonicity of prices is omitted.
}.
Indeed a number of our results still hold true even 
for convex price functions although the lack of positive
homogeneity cannot be reconciled with superhedging
duality established in Theorem \ref{th M}.

We also require that prices be free of arbitrage 
opportunities, a property which we define as%
\footnote{
See \cite{ECTH_2017} for a short discussion of alternative
definitions of arbitrage in an imperfect market. We do not 
adopt the pointwise definition of arbitrage suggested, e.g. 
in \cite{acciaio_et_al}, as this would implicitly correspond 
to assuming a form of rationality on economic agents even 
more extreme than probabilistic sophistication. 
}
\begin{equation}
\label{NA pos}
X\in\Xa,\ X>_*0
\qtext{imply}
\pi(X)>0.
\end{equation}

Of course, \eqref{NA pos} implies that $\pi(X)\ge0$ 
whenever $X\ge_*0$  while \eqref{NA mon} need not 
follow from \eqref{NA pos} if short selling is not 
permitted. We notice that the situation $X>_*\pi(X)>0$, 
though clearly exceptional, does not represent in our 
model an arbitrage opportunity because of the potential 
infeasibility of short positions in the {\it num\'eraire} 
asset. A firm experiencing difficulties in raising funds 
for its projects and competing with other firms in a 
similar position may offer abnormally high returns 
to induce investors to purchase its debt.

A functional satisfying all the preceding properties --
including \eqref{NA pos} -- will be called a {\it price 
function} and the corresponding set will be indicated 
with the symbol $\Pi(\Xa)$. We thus agree that market
prices are free of arbitrage by definition and we shall
avoid recalling this crucial property. At times, though,
it will be mathematically useful to consider pricing 
functionals for which cash additivity and/or the no 
arbitrage property \eqref{NA pos} may fail. These will 
be denoted by the symbol $\Pi_0(\Xa)$. To this end
we note that even if $\pi_0\in\Pi_0(\Xa)$ fails to be 
cash additive, it always has a cash additive part $\pi_0^a$, 
i.e. the functional
\begin{equation}
\label{pia}
\pi_0^a(X)
	=
\inf\big\{\pi_0(X+t)-t:t\in\R\qtext{such that} X+t\in\Xa\big\}
\qquad
X\in\Xa.
\end{equation}
It is routine to show that $\pi_0^a$ is the greatest 
cash additive element dominated by $\pi_0$. 

\begin{example}
In classical, continuous-time financial models, the 
discounted, final payoff of each investment takes 
the form
\begin{equation}
X^{w,\theta}
	=
w+\int\theta dS
\end{equation}
in which $w$ is the initial wealth invested, $S$,
the discounted price process, is a semimartingale 
on some filtered probability space and $\theta$ 
an element of a suitably defined set $\Theta$ of
admissible trading strategies. Then, 
$\pi(X^{w,\theta})=w$. Given the possibility
of shorting the num\'eraire, each investment
is naturally associated with a corresponding
self financing strategy with final payoff
$
X^{0,\theta}
	=
\int\theta dS
$.
The no arbitrage condition \eqref{NA pos},
restricted to self financing strategies, implies 
then that no such strategy may produce a 
strictly positive final payoff, i.e. that
\begin{equation}
\label{NA DS}
\big\{X^{0,\theta}:\theta\in\Theta\big\}
\cap 
L^0(\Omega,\A,P)_{++}
	=
\emp
\end{equation}
Compare with \cite[Definition 2.8(i)]
{delbaen_schachermayer}.
\end{example}

\section{Pricing Charges}
\label{sec pricing measures}

Associated with each price $\pi\in\Pi(\Xa)$ 
is the convex cone
\begin{equation}
\label{C(pi,A)}
\CapiA
	=
\big\{g\in\Fun{\Omega,\A}:
\lambda[X-\pi(X)]
\ge_* 
g
\text{ for some }
\lambda>0\text{ and }\ 
X\in\Xa
\big\}.
\end{equation}
and, more importantly, the collection of
{\it pricing probability charges}%
\footnote{
In \cite{ECTH_2017} we used the term pricing measure 
to define a positive charge dominated by $\pi$ without 
restricting it to be a {\it probability charge}. The focus 
on probability charges will be clear after Theorem 
\ref{th NFLVR}. A definition of the finitely additive 
integral and its properties may be found in 
\cite[III.2.17]{bible}.
}
\begin{equation}
\label{M(pi)}
\Meas_0(\pi,\A)
	=
\Big\{
m\in\Prob_{ba}(\A):
L(\Xa,\A)\subset L^1(\Omega,\A,m)
\text{ and }
\pi(X)\ge\int Xdm
\text{ for every }
X\in\Xa
\Big\}.
\end{equation}

The following, basic result illustrates the role of cash 
additivity and of the set $\Meas_0(\pi,\A)$. In particular,
property \eqref{attain} may be considered as our version 
of the classical superhedging duality Theorem%
\footnote{
The first part of this Theorem is mainly a consequence 
of the Hahn-Banach Theorem and it is true even if the 
price function $\pi_0$ is just convex. The superhedging 
formula \eqref{attain}, on the other hand, necessarily 
requires positive homogeneity.
}. 

\begin{theorem}
\label{th M}
For given $\pi_0\in\Pi_0(\Xa)$ the set $\Meas_0(\pi,\A)$ is 
non empty and each $m\in\Meas_0(\pi,\A)$ satisfies
\begin{equation}
\label{monotone m}
\int fdm
\ge
\int gdm
\qquad
f,g\in L(\Xa,\A),
f\ge_*g.
\end{equation}
Moreover, $\Meas_0(\pi_0,\A)=\Meas_0(\pi_0^a,\A)$ 
and
\begin{equation}
\label{attain}
\pi_0^a(X)
	=
\sup_{m\in\Meas_0(\pi_0,\A)}\int Xdm
\qquad
X\in\Xa\cap\B(\Omega,\A).
\end{equation}
Eventually, the set $\Meas_0(\pi_0,\A)$ is convex and 
compact in the weak$^*$ topology of $ba(\A)$ (i.e. the
topology induced by $\B(\Omega,\A)$ on $ba(\A)$).
\end{theorem}

\begin{proof}
We simply use Hahn-Banach and the integral 
representation of a positive linear functional
$\phi$ on the vector lattice $L(\Xa,\A)$ as
\begin{equation}
\label{rr}
\phi(f)
	=
\phi^\perp(f)+\int fdm_\phi
\qquad
f\in L(\Xa,\A)
\end{equation}
established in Theorem \ref{th rep}. In \eqref{rr} $m_\phi$ 
is a positive charge on $\A$ such that 
$L(\Xa,\A)\subset L^1(\Omega,\A,m_\phi)$; moreover, 
$m_\phi\in\Prob_{ba}(\A)$ if and only if $\phi(1)=1$.

We easily deduce from \eqref{NA mon} that the 
functional defined by
\begin{equation}
\label{superhedge}
\widebar\pi_0(g)
	=
\inf\big\{\lambda\pi_0(X):
\lambda>0,\ X\in\Xa,\ \lambda X\ge_*g\big\}
\qquad
g\in L(\Xa,\A)
\end{equation}
is an element of $\Pi_0(L(\Xa,\A))$ which extends 
$\pi_0$. By Hahn-Banach, we can find a linear 
functional $\phi$ on $L(\Xa,\A)$ such that 
$\phi
	\le
\widebar\pi_0$ 
and $\phi(1)=1$. Necessarily, $f\ge_*0$ implies $\phi(f)\ge0$ 
so that, by \eqref{ge affine}, $\phi$ is $\ge_*$ monotone and 
thus $m_\phi\in\Prob_{ba}(\A,\Neg_*)$. To show that 
$m_\phi\in\Meas_0(\pi_0,\A)$ observe that, by assumption, 
each $X\in\Xa$ admits $a\in\R$ such that $X\ge_*a$ so 
that $\phi^\perp(X)\ge\phi^\perp(a)=0$ and thus
$\pi_0(X)
	=
\widebar\pi_0(X)
	\ge
\phi(X)
	\ge
\int Xdm_\phi$. 
Suppose now that $m\in\Meas_0(\pi_0,\A)$ and that 
$f,g\in L(\Xa,\A)$ and $f\ge_*g$. Then, by \eqref{ge affine}, 
$\{f-g\le-\varepsilon\}\in\Neg_*$ for all $\varepsilon>0$ 
so that 
\begin{equation*}
\int(f-g)dm
=
\int_{\{f-g>-\varepsilon\}} (f-g)dm
\ge
-\varepsilon.
\end{equation*}
We deduce \eqref{monotone m} from $f,g\in L^1(\Omega,\A,m)$.

Concerning the claim 
$\Meas_0(\pi_0,\A)
	=
\Meas_0(\pi_0^a,\A)$, 
it is clear that the inequality $\pi_0^a\le\pi_0$ induces the 
inclusion $\Meas_0(\pi_0^a,\A)\subset\Meas_0(\pi_0,\A)$. 
Conversely, choose $m\in\Meas_0(\pi_0,\A)$. If $X\in\Xa$
and $t\in\R$ are such that $X+t\in\Xa$, then
\begin{align*}
\pi_0(X+t)-t
	\ge
\int (X+t)dm-t
	=
\int Xdm
\end{align*}
so that $m\in\Meas_0(\pi_0^a,\A)$.

The cash additive part $\widebar\pi_0^a$ of $\widebar\pi_0$,
obtained as in \eqref{pia}, is easily seen to be an extension 
of $\pi_0^a$ to $L(\Xa,\A)$. Of course, $\widebar\pi_0^a$ is the 
pointwise supremum of the linear functionals $\phi$ that 
it dominates so that \eqref{attain} follows if we show that 
$m_\phi\in\Prob_{ba}(\A)$ for all such $\phi$. But this is
clear since $\widebar\pi_0^a\ge\phi$ implies that $\phi$ is
positive on $L(\Xa,\A)$. Moreover, 
\begin{align*}
\widebar\pi_0^a(f)
	=
\widebar\pi_0^a(f+t)-t
	\ge
\phi(f+t)-t
	=
\phi(f)-t(1-\phi(1))
\qquad
t\in\R
\end{align*}
which contradicts the inequality $\widebar\pi_0^a\ge\phi$
unless $\phi(1)=1$ and thus $m_\phi\in\Prob_{ba}(\A)$. 

The last claim is an obvious implication of Tychonoff
theorem \cite[I.8.5]{bible}.
\end{proof}

Pricing probability charges closely correspond to the 
risk-neutral measures which are ubiquitous in the traditional
financial literature since the seminal paper of Harrison 
and Kreps \cite{harrison_kreps}. We only highlight that 
the existence of pricing charges and their properties 
are entirely {\it endogenous} here and do not depend 
on any special mathematical assumption -- and actually 
not even on the absence of arbitrage%
\footnote{
Nevertheless, Theorem \ref{th M} relies on property
\eqref{NA mon} which may be seen as a weaker form
of the no arbitrage property.
}. 
In traditional models,
the condition $\Meas_0(\pi,\A)\ne\emp$ is obtained via Riesz 
representation theorem (here replaced with Theorem \ref{th rep}) 
and requires an appropriate topological structure. Finite
additivity is a direct consequence of our minimal approach.
The view expressed by Bewley \cite[p. 516]{bewley} that
charges have ``no economic interpretation''
and the elements he offers in favour of the choice of the 
Mackey topology only make sense if a reference, countably 
additive measure is assumed to be given exogenously.
It is, in other words, a somewhat circular argument. We 
shall argue in section \ref{sec c.a.} that in a model
treating economic rationality as a primitive concept, the 
economic role of countable additivity is far from clear.

It is customary to interpret the expected value
$\int Xdm$
of the asset payoff as its {\it fundamental value}, 
although the value so obtained may vary significantly 
as $m$ ranges across the set of pricing charges.
The {\it intrinsic value} of the asset, computed as
\begin{equation}
\sup_{m\in\Meas_0(\pi,\A)}\int Xdm,
\end{equation}
corresponds to the most optimistic among such
evaluations. A {\it bubble} is customarily defined 
as the spread between the price of an asset and 
its intrinsic value, that is
\begin{equation}
\label{bubble}
\beta_\pi(X)
	\equiv
\pi(X)-\sup_{m\in\Meas_0(\pi,\A)}\int Xdm
\end{equation}
According to Theorem \ref{th M}, assets with 
bounded payoff admit no bubbles. At present,
however, we cannot exclude the extreme situation 
of a {\it pure bubble}, i.e. of an asset $X\in\Xa$ 
such that $X\ge_*0$,
\begin{equation}
\label{pure bubble}
\pi(X)>0
\qtext{but}
\sup_{m\in\Meas_0(\pi,\A)}\int Xdm=0.
\end{equation}
In Theorem \ref{th NFLVR} we obtain, among other things,
a full characterisation of pure bubbles.

At this level of generality we cannot conclude in favour
of the pricing formula popular in microstructure models 
and according to which asset prices are obtained by 
applying some spread to its fundamental value, such as
\begin{equation}
\label{mark-up}
\pi(X)
	=
[1+\alpha(X)]\int Xdm
\qtext{with}
\abs{\alpha(X)}<1
\qquad
X\in\Xa.
\end{equation}
We easily see that \eqref{mark-up} is a stronger condition 
than absence of pure bubbles. This formula will be discussed 
again in section \ref{sec competition}. 

We close remarking that each $m\in\Meas_0(\pi,\A)$
induces a linear functional on $L(\Xa,\A)$. The locally
convex linear topology induced by such functionals and
denoted by $\tau(\pi)$ is weaker than the classical
 weak topology.

\section{Market Completeness.}
\label{sec complete}

Competition among financial intermediaries may involve
existing assets and/or the launch of new financial claims. 
As a consequence it  may produce two different effects: 
\tiref a
a reduction of intermediation margins, and thus lower asset 
prices, and 
\tiref b
an enlargement of the set $\Xa$ of traded assets,
thus contributing to complete the markets. This short 
discussion justifies our interest for the set%
\footnote{
The set $\Ext_0(\pi,\A)$ is defined likewise but with $\Pi(\Xa)$
replaced with $\Pi_0(\Xa)$.
}
\begin{equation}
\label{Ext}
\Ext(\pi,\A)
	=
\big\{
\pi'\in\Pi\left(L(\Xa,\A)\right):
\restr{\pi'}{\Xa}\le\pi
\big\}.
\end{equation}
In this section we want to address the following question:
{\it under what conditions is it possible to extend the actual
markets to obtain an economy with complete financial 
markets without violating the no arbitrage principle?}
This translates into the mathematical condition 
$\Ext(\pi,\A)\ne\emp$ and, if $\pi'\in\Ext(\pi,\A)$, we 
speak of $(L(\Xa,\A),\ge_*,\pi')$ as an $\A$-completion 
of $(\Xa,\ge_*,\pi)$.

The first papers to focus on the extension property 
of financial prices were those by Harrison and Kreps 
\cite{harrison_kreps} and Kreps \cite{kreps} (see 
also \cite[Theorem 8.1]{MAFI_2008}), although this 
aspect has later been somehow neglected in the 
following literature. In their approach the completion
property is related to (and in fact in most cases
equivalent to) the concept of viability that we discuss
later. Recalling the discussion in subsection 
\ref{subsec complete}, we remark that our definition 
of $\A$-completion does not preclude that markets 
may be in principle further extended, e.g. by passing 
to a larger $\sigma$ algebra than $\A$. However, 
given that $\A$ is quite arbitrary, our results carry 
over very simply.

We obtain the following complete characterisation for 
the case of cash additive $\A$-completions.

\begin{theorem}
\label{th NFLVR}
For a market $(\Xa,\ge_*,\pi)$ the following properties 
are mutually equivalent: 
\begin{enumerate}[(a).]
\item\label{it wNFL}
$\pi$ satisfies the condition
\begin{equation}
\label{wNFLVR}
\cl[\tau(\pi)]{\CapiA}
\cap
\{f\in\Fun{\Omega,\A}:f>_*0\}
	=
\emp;
\end{equation}
\item\label{it NFL}
$\pi$ satisfies the condition
\begin{equation}
\label{NFLVR}
\cl[u]{\CapiA}
\cap
\{f\in\Fun{\Omega,\A}:f>_*0\}
	=
\emp;
\end{equation}
\item\label{it compl}
the market $(\Xa,\ge_*,\pi)$ admits an $\A$-completion;
\item\label{it NPB}
the set $\Meas_0(\pi,\A)$ of pricing probability charges 
satisfies the condition
\begin{equation}
\label{NPB}
\sup_{m\in\Meas_0(\pi,\A)}\int (f\wedge 1)dm
	>
0
\qtext{for all}
f\in\Fun{\Omega,\A}
\text{ such that }
f>_*0.
\end{equation}
\end{enumerate}
\end{theorem}

\begin{proof}
The implication \imply{it wNFL}{it NFL} is immediate
in view of the fact that $\tau(\pi)$ is a topology
weaker than the topology of uniform distance.
Assume that \eqref{NFLVR} holds and define the functional
\begin{equation}
\rho(f)
	=
\inf\big\{\lambda\pi(X)-a:
a\in\R,
\lambda>0,
X\in\Xa
\text{ such that }
\lambda X\ge_* f+a\big\}
\qquad
f\in L(\Xa,\A).
\end{equation}
Property \eqref{ge affine} and $\ge_*$ monotonicity 
of $\pi$ imply that $\rho\in\Ext_0(\pi,\A)$. Moreover, 
it is easily seen that 
\begin{equation}
\label{rhoad}
\rho(f+x)
	=
\rho(f)+x
\qquad
f\in L(\Xa,\A), 
x\in\R
\end{equation} 
and thus that $\rho$ is cash additive. To prove that 
$\rho
	\in
\Pi\big(L(\Xa,\A)\big)$, 
fix $f\in L(\Xa,\A)$
such that $f_1=f\wedge 1>_*0$. In search of a contradiction, 
suppose that $\rho(f_1)\le 0$. Then for each $n\in\N$ 
there exist $a_n\in\R$, $\lambda_n>0$ and $X_n\in\Xa$ 
such that 
$\lambda_nX_n
	\ge_* 
f_1+a_n$ 
but 
$\lambda_n\pi(X_n)
	<
2^{-n}+a_n$. 
This clearly implies
\begin{equation}
\lambda_n[X_n-\pi(X_n)]
	\ge_*
f_1-2^{-n}
\end{equation}
and $f_1\in\cl[u]{\CapiA}$, contradicting
\eqref{NFLVR}. It follows that $\rho(f_1)>0$ 
and that \imply{it NFL}{it compl}.

Choose $\rho\in\Ext(\pi,\A)$ and let $f_1$ be
defined as above. Consider the linear 
functional 
\begin{equation}
\hat\phi(x+bf_1)
	=
x+b\rho(f_1)
\qquad
x,b\in\R
\end{equation}
defined on the linear subspace $L_0\subset L(\Xa,\A)$ 
spanned by $\{1,f_1\}$. Given that $\rho$ satisfies 
\eqref{rhoad}, $\hat\phi$ is dominated by $\rho$ 
on $L_0$ so that we can find an extension $\phi$ of 
$\hat\phi$ to the whole of $L(\Xa,\A)$ still dominated 
by $\rho$. As in Theorem \ref{th M}, given that 
$\phi$ is a positive linear functional on a vector 
lattice of functions, we obtain the representation 
\eqref{phi rep} with 
$m_\phi\in\Meas_0(\rho,\A)
	\subset
\Meas_0(\pi,\A)$. 
Observe that $f_1\sim_*f_1^+$ and that 
$f_1^+\in\B(\Omega,\A)$.
We deduce from Theorem \ref{th M},
\begin{align*}
0
	<
\rho(f_1)
	=
\phi(f_1)
	=
\phi(f_1^+)
	=
\int f_1^+dm_\phi
	=
\int f_1dm_\phi.
\end{align*}
Thus, \imply{it compl}{it NPB}.

Let now
$f
	\in
\cl[\tau(\pi)]{\CapiA}$. For each $m\in\Meas_0(\pi,\A)$
and $n\in\N$ there exist $\lambda^n>0$,
$X^n\in\Xa$ and $h_n\in L(\Xa,\A)$ such that 
$\lambda^n(X^n-\pi(X^n))\ge_*h_n$
and
$\int(f-h_n)dm\le2^{-n}$.
But then,
\begin{align*}
\int fdm
\le
2^{-n}+\lambda^n\Big[\int X^n dm-\pi(X^n)\Big]
\le
2^{-n}.
\end{align*}
Under \iref{it NPB} this excludes that $f>_*0$ and
proves the implication \imply{it NPB}{it wNFL}.
\end{proof}

It is immediate to recognize the close relationship linking 
the condition \eqref{NFLVR} to the {\it No-Free-Lunch-with%
-Vanishing-Risk} principle formulated long ago by Delbaen 
and Schachermayer \cite{delbaen_schachermayer} in a 
highly influential paper. Since then, this condition, despite 
its unclear economic interpretation, has been unanimously 
accepted as the most convenient mathematical restatement 
of the no arbitrage principle. Leaving aside the dissimilarity 
in the set-up adopted, the major difference between the 
two conditions lies in the interpretation. In fact, due to 
the restrictions to trade considered here, the elements of 
the form $X-\pi(X)$, with $X\in\Xa$, need not correspond 
to the payoff of any feasible trading strategy so that 
\eqref{NFLVR} cannot be interpreted as a mathematical 
reformulation of the no arbitrage principle. Rather, the 
set $\cl[u]\CapiA$ represents those potential claims that 
cannot be assigned a strictly positive price by {\it any} 
extension of the actual 
price function. Thus, Theorem \ref{th NFLVR} characterizes 
\eqref{NFLVR} as a condition necessary and sufficient for 
financial markets to admit a no arbitrage $\A$-completion. 
Notice that a strictly positive extension may still exist 
even when \eqref{NFLVR} fails. In this case, however, 
it cannot be cash additive. 

In the light of the discussion following \eqref{pure bubble}, 
condition \eqref{NPB} corresponds to a {\it No-Pure-Bubble} 
({\it NPB}) condition, although it does not exclude more general 
bubbles defined as in \eqref{bubble}. In other terms, with 
complete financial markets there cannot exist pure bubbles,
which are instead possible with incomplete markets%
\footnote{
The fact that completeness of financial markets may 
change the structure of asset bubbles has already 
been noted by Jarrow, Protter and Shimbo 
\cite{jarrow_protter_shimbo}.
}.

Given the arbitrariness of the $\sigma$ algebra $\A$,
one may consider the possibility of further extending
the market from $L(\Xa,\A)$ to $L(\Xa',\A')$ where
$\Xa'=L(\Xa,\A)$ and $\A\subset\A'$. If $\pi'\in\Ext(\pi,\A)$,
it is then immediate from Theorem \ref{th NFLVR}
that $(\Xa',\ge_*,\pi')$ admits an $\A'$-completion
if and only if 
\begin{equation}
\label{NFLVR1}
\cl[u]{\mathscr C(\pi',\A')}
\cap
\{f\in\Fun{\Omega,\A'}:f>_*0\}
=
\emp.
\end{equation}
Notice that each $\pi''\in\Ext(\pi',\A')$ satisfies
the supermartingale-like inequality
$\restr{\pi''}{\Xa'}\le\pi'$

Incidentally we remark that,  from the equality 
$\Meas_0(\pi_0,\A)
	=
\Meas_0(\pi^a_0,\A)$, 
it follows that $\pi_0\in\Pi_0(\Xa)$ satisfies \eqref{NFLVR} 
if and only if so does $\pi_0^a$. In fact,

\begin{lemma}
\label{lemma aNFLVR}
Let $\pi\in\Pi_0(\Xa)$. Then: 
(a)
$
\CapiA
\subset
\Ca(\pi^a,\A)
\subset
\cl[u]{\CapiA}$
and
(b)
for every $X\in\Xa$,
$\pi^a(X)\le0$ if and only if $X\in\cl[u]{\CapiA}$.
Therefore, $\pi^a\in\Pi(\Xa)$ if and only if
\begin{equation}
\cl[u]{\CapiA}
\cap
\{X\in\Xa:X>_*0\}
	=
\emp.
\end{equation}
\end{lemma}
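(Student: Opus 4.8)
The plan is to prove (a) and (b) in turn and then read off the final equivalence from (b); throughout one uses freely that pointwise $\ge$ implies $\ge_*$ (property \eqref{ge mon}), that $\ge_*$ is preserved under adding constants and under multiplication by a positive scalar (property \eqref{ge affine}), and transitivity of $\ge_*$.

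For the inclusions in (a): the first, $\mathscr C(\pi)\subset\mathscr C(\pi^a)$, is immediate from $\pi^a\le\pi$, since if $\lambda[X-\pi(X)]\ge_*g$ for some $\lambda>0$, $X\in\Xa$, then $\lambda[X-\pi^a(X)]\ge\lambda[X-\pi(X)]$ pointwise, hence $\lambda[X-\pi^a(X)]\ge_*g$. For the second, $\mathscr C(\pi^a)\subset\cl[u]{\mathscr C(\pi)}$, I would fix $g$ with $\lambda[X-\pi^a(X)]\ge_*g$ and, for arbitrary $\varepsilon>0$, use the definition \eqref{pia} of $\pi^a$ to choose $t\in\R$ with $Y:=X+t\in\Xa$ and $\pi(Y)-t<\pi^a(X)+\varepsilon$; then $Y-\pi(Y)>X-\pi^a(X)-\varepsilon$ pointwise, so $\lambda[Y-\pi(Y)]\ge_*g-\lambda\varepsilon$, i.e. $g-\lambda\varepsilon\in\mathscr C(\pi)$, and since $\lambda$ is fixed, $g-\lambda\varepsilon\to g$ uniformly as $\varepsilon\downarrow0$, whence $g\in\cl[u]{\mathscr C(\pi)}$.

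For (b), the implication $\pi^a(X)\le0\Rightarrow X\in\cl[u]{\mathscr C(\pi)}$ is short: $\pi^a(X)\le0$ gives $X-\pi^a(X)\ge X$ pointwise, hence $1\cdot[X-\pi^a(X)]\ge_*X$, so $X\in\mathscr C(\pi^a)$ and we conclude by (a). The converse is the crux, and the obstacle is to convert membership in a \emph{uniform} closure back into a numerical bound on $\pi^a(X)$. Here I would invoke the cash-additive extension $\widebar{\pi^a}$ of $\pi^a$ to $L(\Xa)$ built in the proof of Lemma \ref{lemma M}, which is positively homogeneous, $\ge_*$-monotone and cash additive with $\restr{\widebar{\pi^a}}{\Xa}=\pi^a$. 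Given $X\in\Xa\cap\cl[u]{\mathscr C(\pi)}$ and $\varepsilon>0$, pick $g\in\mathscr C(\pi)$ within uniform distance $\varepsilon$ of $X$, so that $g\ge_*X-\varepsilon$, and then $\lambda>0$, $Y\in\Xa$ with $\lambda[Y-\pi(Y)]\ge_*g\ge_*X-\varepsilon$. Since $\widebar{\pi^a}$ is cash additive and $\restr{\widebar{\pi^a}}{\Xa}=\pi^a\le\pi$, we get $\widebar{\pi^a}(\lambda[Y-\pi(Y)])=\lambda(\pi^a(Y)-\pi(Y))\le0$; by $\ge_*$-monotonicity and cash-additivity this forces $\pi^a(X)-\varepsilon=\widebar{\pi^a}(X-\varepsilon)\le0$, and letting $\varepsilon\downarrow0$ yields $\pi^a(X)\le0$.

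Finally, for the ``Therefore'': it is routine to check from \eqref{pia} that $\pi^a$ is always positively homogeneous, subadditive, $\ge_*$-monotone and cash additive, with $\pi^a(1)=1$, i.e. $\pi^a\in\Pi^a_0(\Xa)$; hence $\pi^a\in\Pi(\Xa)$ if and only if $\pi^a$ in addition satisfies the no-arbitrage requirement \eqref{NA pos}. By part (b), for $X\in\Xa$ one has $\pi^a(X)\le0$ exactly when $X\in\cl[u]{\mathscr C(\pi)}$, so \eqref{NA pos} fails for $\pi^a$ precisely when some $X\in\Xa$ with $X>_*0$ lies in $\cl[u]{\mathscr C(\pi)}$; therefore $\pi^a\in\Pi(\Xa)$ if and only if $\cl[u]{\mathscr C(\pi)}\cap\{X\in\Xa:X>_*0\}=\emp$. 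The one genuinely delicate ingredient is the $\ge_*$-monotonicity of $\widebar{\pi^a}$ used above, which I expect to inherit from that of $\widebar\pi$ (established in Lemma \ref{lemma M}) by means of \eqref{ge affine}; the rest is bookkeeping with the properties \eqref{ge}.
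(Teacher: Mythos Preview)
Your proof is correct and follows essentially the same route as the paper's. The only noteworthy difference is in the converse of (b): the paper applies monotonicity, positive homogeneity and cash additivity of $\pi^a$ directly to the inequality $\lambda_n[X_n-\pi(X_n)]+2^{-n}\ge_* X$ to obtain $\pi^a(X)\le 2^{-n}+\lambda_n[\pi^a(X_n)-\pi(X_n)]$, whereas you make the domain issue explicit by routing the same computation through the extension $\widebar{\pi^a}$ on $L(\Xa)$ constructed in Lemma~\ref{lemma M}; this is a clarification rather than a different argument.
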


\begin{proof}
\tiref a.
For each $X\in\Xa$ it is obvious that $X-\pi(X)\le X-\pi^a(X)$. 
However, $X-\pi^a(X)$ is the limit in the uniform topology
as $t\to+\infty$, of
$
X+t-\pi(X+t)
	\in
\CapiA$.
\tiref b.
$X\in\Xa$ and $\pi^a(X)\le0$ imply that for each $n\in\N$ 
and for $t_n>0$ sufficiently large
\begin{align*}
X
	\le 
X-\pi^a(X)
	\le
2^{-n}+[X+t_n-\pi(X+t_n)]
\end{align*}
so that $X\in\cl[u]{\CapiA}$. Viceversa, if
$X\le2^{-n}+\lambda_n[X_n-\pi(X_n)]$ for some 
$X_n\in\Xa$ and $\lambda_n\ge0$, then, moving
$\lambda_n\pi(X_n)$ to the left hand side if positive
and using cash additivity, we conclude
$\pi^a(X)
	\le 
2^{-n}+\lambda_n[\pi^a(X_n)-\pi(X_n)]
	\le
2^{-n}$.
\end{proof}

To highlight the role of competition in financial markets,
consider two pricing functions $\pi,\pi'\in\Pi(\Xa)$. If 
$\pi\le\pi'$ then $\Ca(\pi',\A)\subset\CapiA$. Thus, lower 
financial prices are less likely to satisfy \eqref{NFLVR} 
and thus to admit an extension to a complete financial 
market free of arbitrage. Competition among market 
makers, producing lower spreads, may thus have two 
contrasting effects on economic welfare. On the one 
side it reduces the well known deadweight loss implicit 
in monopolistic pricing while, on the other, it imposes 
a limitation to financial innovation and its benefits in 
terms of the optimal allocation of risk. It may be 
conjectured that fully competitive pricing, i.e. the 
pricing of assets by their fundamental value, may not 
be compatible with the extension property discussed 
here. We investigate this issue in the following Section 
\ref{sec competition}. 

%

\section{Viability}
\label{sec viable}
Following Harrison and Kreps \cite{harrison_kreps} and
Kreps \cite{kreps}, several authors in financial economics
have characterized the notion of {\it viability}, i.e. the 
property that price functions support optimal choice by 
individuals with monotone, convex and continuous 
preferences. Continuity of preferences is a key property
in proving the existence of economic equilibrium (see the 
survey by Mas-Colell and Zame \cite{mas-colell_zame}).
In addition, in most models this property provides the
necessary justification for formulating the separating 
condition \eqref{NFLVR} in terms of the closure of 
$\CapiA$.

In our model, viability induces a conclusion much
stronger than condition \eqref{NFLVR} which is in 
fact equivalent to  a local version of it. Moreover, 
the restrictions to trade considered, particularly 
the constraint that prevents shorting the 
{\it num\'eraire}, drive a wedge between our 
approach and other papers in this literature (but 
one should refer to Jouini and Kallal \cite{jouini_kallal_99}
for a comparison). 

Let us consider the space $\mathbb X=\R\times\Fun{\Omega,\A}$,
endowed with the topology of uniform convergence,
as a description of agents consumption space at two
different instants of time. Agents with no initial 
endowment are described by a strict preference
$\succ$ (i.e. a transitive and non reflexive binary 
relation) which induce family of preferred sets
\begin{equation}
\label{W}
\W(c,h)
	=
\{(c',h')\in\mathbb X:
(c',h')\succ(c,h)
\}
\qquad
(c,h)\in\mathbb X
\end{equation}
and by the budget set, defined as
\begin{equation}
\label{budget}
\mathcal B(\pi)
	=
\big\{(c,h)\in\mathbb X:
c+\lambda\pi(X)\le 0
\text{ and }
\lambda X\ge_*h
\text{ for some }
\lambda>0
\text{ and }	
X\in\Xa
\big\}.
\end{equation}

We can then define viability in formal terms:
\begin{definition}
The price function $\pi$ is viable relatively to the
strict preference relation $\succ$ if
\begin{equation}
\label{viable}
\mathcal B(\pi)\cap \W(0,0)
	=
\emp.
\end{equation}
\end{definition}

Viability has special importance when preferences 
are monotonic, convex and continuous. In order to 
define these properties, we extend $>_*$ to 
$\mathbb X$ 
by writing
\begin{equation}
(c,h)
	>_*
0
\qtext{whenever}
c\ge0,\ h\ge_*0
\text{ and }
c+h>_*0,
\end{equation}
and define the convex cone (with the origin deleted)
\begin{equation}
\label{K}
\K
	=
\{(c,h)\in\mathbb X:
(c,h)>_*0
\}.
\end{equation}
$\K$ plays here the same role as in \cite{kreps}. We
define now the class of admissible preferences.

\begin{definition}
\label{def class A}
A strict preference $\succ$ on $\R\times\Fun{\Omega,\A}$
is of class $\mathbf A$ if it satisfies the following properties:
\begin{subequations}
\label{adm}
\begin{equation}
\label{adm monotone}
(c,h)+\K
\subset 
\W(c,h)
\qquad
(c,h)\in\mathbb X
\end{equation}
\begin{equation}
\label{adm convex}
\W(c,h)
\qtext{is convex}
(c,h)\in\mathbb X
\end{equation}
\begin{equation}
\label{adm cts}
\big(\forall k\in\K)
\big(\exists a_k>0\big):
\quad
\{a k:a>a_k\}\subset\Int\big(\W(0,0)\big)%
\footnote{
$\Int(A)$ denotes the interior of the set $A$. Recall
that $\mathbb X$ is endowed with the uniform topology.
}.
\end{equation}
\end{subequations}
\end{definition}

Condition \eqref{adm monotone} is a monotonicity 
property while \eqref{adm cts} is a weak form of 
semicontinuity. We can now define two distinct 
notions of viability as follows:

\begin{definition}
\label{def adm}
The price function $\pi$ is 
(a)
viable if it is $\succ$-viable for some $\succ$ of class 
$\mathbf A$,
(b)
$*$-viable if either
\begin{enumerate}[(i)]
\item
it is viable and 
$(\varepsilon,-\varepsilon)\in\mathcal B(\pi)$ 
for some $\varepsilon>0$ or
\item
it is $\succ$-viable for some $\succ$ of class $\mathbf A$
which satisfies the additional condition
\begin{equation}
\label{adm indiff}
\W(-t,t)=\W(0,0)
\qtext{for some}
t>0.
\end{equation}
\end{enumerate}
\end{definition}

The literature has rarely considered the need to reinforce
the notion of viability as we do here introducing $*$-viability.
The reason is that the two definitions coincide whenever
either the {\it num\'eraire} is traded freely or preferences
are defined over net trades. And a combination of either one
of these two features appears in virtually all contributions.
It is thus of some importance to remark that outside of the 
narrow, traditional approach viability may need to be 
reformulated.

Another new feature of our model is the focus on a {\it local} 
version of the above properties. If $k\in\K$, a strict preference 
is of class $\mathbf A_k$ if it satisfies the above properties 
\eqref{adm} in restriction to the convex cone 
$\{ak+(b,0):a,b\in\R_+\}$ 
(rather than the whole of $\K$). If for any $k\in\K$ the price 
function $\pi$ is viable for some preference of class $\mathbf A_k$, 
we speak of $\pi$ as locally viable (resp. locally $*$-viable).

\begin{theorem}
\label{th viable}
(i).
A price function $\pi$ satisfies \eqref{NFLVR} if and
only if it is locally $*$-viable.
(ii)
A price function $\pi$ is $*$-viable if and
 only if it admits
$m\in\Meas_0(\pi,\A)$ such that
\begin{equation}
\label{M+}
\int (f\wedge1)dm>0
\qquad
f\in L(\Xa,\A),\ f>_*0.
\end{equation}
\end{theorem}

\begin{proof}
\tiref i.
Endow the space 
$\mathbb X_b
	=
\R\times\B(\Omega,\A)$ 
with the norm 
$\norm{(c,h)}
	=
\abs c+\norm h_{\B(\Omega,\A)}$
and assume that $\pi$ is locally $*$-viable. Fix 
$f>_*0$, write $k=(0,f\wedge1)$ and choose $\succ$ 
to be of class $\mathbf A_k$ and such that $\pi$ is 
$\succ$-viable. Then, 
$\W(0,0)\cap\mathbb X_b$ 
and 
$\mathcal B(\pi)\cap\mathbb X_b$ 
are disjoint, convex sets, the former has non empty 
interior (by \eqref{adm cts}) and contains a convex 
cone (by \eqref{adm monotone}) while the latter 
contains the origin. There exists then, \cite[V.1.12]{bible}
a non trivial, continuous, linear functional $\Phi$
such that 
\begin{equation}
\label{separ}
\inf_{w\in\W(0,0)\cap\mathbb X_b }\Phi(w)
\ge
0
\ge
\sup_{b\in\mathcal B(\pi)\cap\mathbb X_b}\Phi(b).
\end{equation}
Thus $\Phi((-1,1))\le0$. If
$(\varepsilon,-\varepsilon)
\in
\mathcal B(\pi)$
for some $\varepsilon>0$ then $\Phi((-1,1))=0$. 
Otherwise, $*$-viability implies that $\succ$ may 
be chosen so as to satisfy \eqref{adm indiff}. Fix 
then $t$ as in \eqref{adm indiff} and let $c>0$ be 
arbitrary. We have, $(c-t,t)\succ(-t,t)$ by 
\eqref{adm monotone} so that 
$(c-t,t)
	\in
\W(0,0)$. 
Continuity of $\Phi$ induces the conclusion that 
$\Phi((-t,t))\ge 0$ and again we obtain $\Phi((-1,1))=0$. 
Given that $\Phi(\Int(\W(0,0)))$ is an open interval 
in $\R_+$ \cite[p. 113]{conway} and that 
$ak\in\Int(\W(0,0))$ for some $a>0$ we conclude 
that $\Phi(k)>0$. By the representation of continuous
linear functionals we conclude that $\Phi$ may be 
written in the form
\begin{equation}
\Phi(c,h)
	=
xc
+
\int hd\mu
\end{equation}
for some $x\in\R$ and some charge $\mu\in ba(\A)$. 
The inclusion
$-\mathbb X_{b,+}
	\subset
\mathcal B(\pi)$,
implies that $x\ge0$, $\mu\ge0$ and $\mu\ne0$
(because $\Phi(k)=\int(f\wedge 1)d\mu>0$); in
addition $\Phi(1,-1)=0$ leads to $x=\norm\mu$. 
By normalization we obtain that
\begin{equation}
\label{Phi}
\Phi(c,h)
=
\int(c+h)dm
\qquad
(c,h)\in\mathbb X_b
\end{equation} 
for some $m\in\Prob_{ba}(\A)$. The inclusion
$(0,\set N)\in\mathcal B(\pi)$, valid for all 
$N\in\Neg_*$, implies $m\in\Prob_{ba}(\A,\Neg_*)$. 
Eventually, if $X\in\Xa$, there exists $a\in\R$ 
such that $X\sim_*X\vee a$ so that, for any 
$n\in\N$ we have the inclusion 
$(-\pi(X),(X\vee a)\wedge 2^{-n})
\in
\mathcal B(\pi)\cap\mathbb X_b$,
from which we deduce
\begin{equation}
\label{domin}
\int(X\wedge 2^{-n})dm
	=
\int[(X\vee a)\wedge 2^{-n}]dm
	=
\Phi(-\pi(X),(X\vee a)\wedge 2^{-n})+\pi(X)
	\le
\pi(X).
\end{equation}
This, being true for all $X\in\Xa$ and $n\in\N$, implies 
via \cite[III.3.6]{bible} that $\Xa\subset L^1(\Omega,\A,m)$ 
and, eventually, that $m\in\Meas_0(\pi,\A)$. Summing up,
 for each $f>_*0$ there exists $m\in\Meas_0(\pi,\A)$ such 
 that $\int(f\wedge 1)dm>0$, a condition equivalent to 
 \eqref{NFLVR} by Theorem \ref{th NFLVR}.

Viceversa, assume \eqref{NFLVR}, fix $k_0=(c_0,h_0)\in\K$ 
and choose $m_0\in\Meas_0(\pi,\A)$ such that 
$\int(c_0+ h_0)dm_0>0$%
\footnote{
The existence of such $m$ is trivial if $c_0>0$ or else 
follows from \eqref{NPB}. 
}. 
Define a strict preference $\succ$ by letting
$(c,h)\succ(c',h')$ whenever $h-h'\ge_*a$
for some $a\in\R$ and
\begin{equation}
\label{pref}
(c-c')+\lim_k\int\big[(h-h')\wedge 2^k\big]dm_0
>
0.
\end{equation}
Properties \eqref{adm} and \eqref{adm indiff} are 
easily seen to hold true so that $\succ$ is of class 
$\mathbf A_{k_0}$. This, being true for all
$k_0\in\K$, proves that $\pi$ is locally $*$-viable.

\tiref{ii}.
If $\pi$ is $*$-viable, it is necessarily locally $*$-viable.
The proof is identical to that of claim \tiref i with the
only difference that $\Int(\W(0,0))$ contains now 
appropriate multiples of {\it each} $k\in\K$. Thus, if 
$\Phi$ is the continuous linear functional introduced 
in \eqref{separ} we conclude that $\Phi(k)>0$ for all 
$k\in\K$. Proceeding exactly as above, we obtain 
$m\in\Meas_0(\pi,A)$ that satisfies \eqref{M+}. 
Conversely, if $m\in\Meas_0(\pi,A)$ satisfies such
properties, the strict preference defined in
\eqref{pref} is necessarily $*$-viable.
\end{proof}

As is clear from the proof, $*$-viability is required 
to conclude that the set function $m$ obtained by 
normalizing $\Phi$ is a {\it probability} charge.
Assuming simple viability, we would in fact obtain 
an element of $ba(\A)_+$ which is dominated by 
$\pi$ on $\Xa$ and this is weaker than \eqref{NFLVR}.

The main finding of Theorem \ref{th viable} is that 
full viability of financial prices is a much more 
restrictive condition than \eqref{NFLVR} and is 
in fact equivalent to the existence of 
$m\in\Meas_0(\pi,\A)$ that satisfies \eqref{M+}. 
This point will be discussed at length in the following 
section \ref{sec competition} in which we provide 
examples in which such set functions may not exist%
\footnote{
Kreps \cite[Example 3, p. 23]{kreps} has documented 
other examples of a topological vector space admitting 
no continuous, strictly positive linear functional and in 
which, as a consequence, absence of arbitrage is a
much weaker condition than admissibility,  in line with
the conclusions of Theorem \ref{th viable} above. 
}.
In these cases, the general notion of viability as
defined above is not useful. Although this may 
appear at first as a limitation, the next result 
suggests that in fact it may not be so provided
we enlarge the set of preferences that we
consider as admissible\footnote{
This result compares with
\cite[Theorem 1]{loewenstein_willard_et}.
}.

\begin{theorem}
The price function $\pi$ satisfies \eqref{NFLVR} if
and only if it is viable for some strict preference
$\succ$ which satisfies \eqref{adm monotone},
\eqref{adm cts}%
\footnote{
Properties \eqref{adm monotone} and \eqref{adm cts}
are unduly restrictive and a local version may be used
instead. We may, in other words, require that for each
$k\in\K$ there exists a strict preference which is of
class $\mathbf A_k$ (save for convexity) and such that
$\pi$ is $\succ$-viable. This more general formulation
would however require a quite involved statement.
} 
as well as either
\begin{subequations}
\begin{equation}
\label{net}
(c,h)\in\W(0,0)
\qiff
(c-t,h+t)\in\W(0,0)
\qquad
t>0
\qtext{or}
\end{equation}
\begin{equation}
\label{borrow}
(-t,t)\in\mathcal B(\pi)
\qquad
t>0.
\end{equation}
\end{subequations}
\end{theorem}

\begin{proof}
Let $\succ$ meet the conditions of the claim and
choose $X\in\Xa$. If $\pi(X)\le0$ then 
$(0,X-\pi(X))\in\mathcal B(\pi)$; if on the other hand 
$\pi(X)>0$ then the same conclusion follows from either
It is thus clear that if $\pi$ is $\succ$-viable then necessarily 
$\{X-\pi(X):X\in\Xa\}$ and
$\mathcal H
	=
\{f\in\Fun{\Omega,\A}:(0,f)\in\W(0,0)\}$
are disjoint sets. By \eqref{adm monotone} the same 
conclusion remains valid upon replacing the first set 
with $\CapiA$. For each $f\in\Fun{\Omega,\A}$
such that $f>_*0$ (and thus $f\in\mathcal H$)
we deduce from \eqref{adm cts} that
$f\notin\cl[u]{\CapiA}$. This shows that
\eqref{NFLVR} holds. If, conversely, $\pi$ satisfies
\eqref{NFLVR}, then one may define
$(c,h)\succ(c',h')$ simply by letting
\begin{equation}
\pi(h)+c>\pi(h')+c'.
\end{equation}
It is immediate that this is a strict preference and
that it satisfies \eqref{adm monotone},
\eqref{adm cts} as well as \eqref{net}
while it is obvious that $\pi$ is $\succ$-viable.
\end{proof}

Thus, \eqref{NFLVR} i.e. local $*$-viability, are
indeed perfectly sensible conditions from the
point of view of market equilibrium, at least
under either \eqref{net} or \eqref{borrow} and 
as long as we do not  insist on convexity of 
preferences.

\section{Competitive Complete Markets.}
\label{sec competition}

In this section we investigate the conditions under which
the set $\Meas_0(\pi,\A)$ contains a strictly positive 
element, i.e. some $m$ satisfying \eqref{M+}. We
denote the corresponding set with the symbol
$\Meas(\pi,\A)$%
\footnote{
In \cite{riedel} a set function satisfying \eqref{M+} is 
said to have {\it full support} and the emergence of 
measures with full support follows easily from the 
assumption that $\Omega$ is a complete, separable 
metric space and $\Xa$ consists of continuous functions 
defined thereon. 
Notice that for a pricing probability charge to meet
\eqref{M+} it is necessary that it be strictly positive 
on each non negligible set. This condition is, however,
not sufficient. In fact, although $f>_*0$ implies 
$\{f>0\}\notin\Neg_*$, the fact that $\Neg_*$ is
not closed with respect to countable unions does
not rule out that $\{f>\varepsilon\}\in\Neg_*$ for
all $\varepsilon>0$ which contradicts \eqref{M+}.
}.
By Theorem \ref{th viable}, the condition 
$\Meas(\pi,\A)\ne\emp$ is equivalent to the price 
function $\pi$ being viable. More importantly, if 
$m\in\Meas(\pi,\A)$ then each asset with payoff in 
$L(\Xa,\A)$ may be priced by its fundamental value 
and this price rule $\rho$ would be free of arbitrage
and linear. In other words, $\rho$ would provide an 
$\A$-extension of $\pi$ which is fully competitive,
in symbols $\rho\in\Ext(\pi,\A)$ with $\ma(\rho)=0$
(recall the definition \eqref{market power} of the 
market power index $\ma$).

This is of course, an extreme situation. The question 
we want to address is rather: {\it given a market, 
$(\Xa,\ge_*,\pi)$, is it possible to find an $\A$-completion
that permits some degree of competitiveness?}. In other 
words, we look for $\rho\in\Ext(\pi,\A)$ satisfying the 
{\it No-Full-Monopoly} ({\it NFM}) condition
\begin{equation}
\label{NFM}
\ma(\rho)<1.
\end{equation}
As we shall see, this condition has in fact far reaching 
implications.

\begin{theorem}
\label{th NFM}
A market $(\Xa,\ge_*,\pi)$ satisfies 
$\Meas(\pi,\A)\ne\emp$ 
if and only if it admits a NFM $\A$-completion. 
\end{theorem}

\begin{proof}
Necessity is immediate. If $m\in\Meas(\pi,\A)$, define 
$\rho\in\Ext_0(\pi,\A)$ by
\begin{equation}
\rho(f)
	=
\int fdm
\qquad
f\in\ L(\Xa,\A).
\end{equation}
Then, $\rho$ is cash additive and $\ma(\rho)=0$, by 
linearity. Conversely, assume that $\rho\in\Ext(\pi,\A)$ 
satisfies \eqref{NFM}. For each $n\in\N$ define the sets
\begin{equation}
\Bor_0
=
\{b\in\B(\Xa,\A,\Neg_*):
1\ge_* b>_*0\}
\qtext{and}
\Bor_n
	=
\big\{b\in\Bor_0:
\rho(b)>2/[n(1-\ma(\rho)]\big\}
\end{equation}
and let $\co(\Bor_n)$ be the convex hull of $\Bor_n$. 
Notice that 
$\Bor_0
	=
\bigcup_n\Bor_n$,
because $\rho\in\Ext(\pi,\A)$. If
$f
	=
\sum_{i=1}^Nw_ib_i\in\co(\Bor_n)$ 
then,
\begin{align*}
\rho(f)
	\ge
\big(1-\ma(\rho)\big)\sum_{i=1}^Nw_i\rho(b_i)
	\ge
2/n.
\end{align*}
In view of the properties of $\Meas_0(\pi,\A)$ proved in 
Theorem \ref{th M}, we can apply Sion minimax 
Theorem \cite[Corollary 3.3]{sion} and obtain from 
\eqref{attain} 
\begin{align*}
\inf_{f\in\co(\Bor_n)}\rho(f)
	=
\inf_{f\in\co(\Bor_n)}\sup_{\mu\in\Meas_0(\rho,\A)}
\int fd\mu
	=
\sup_{m\in\Meas_0(\rho,\A)}\inf_{f\in\co(\Bor_n)}
\int fd\mu.
\end{align*}
Therefore, for each $n\in\N$ there exists 
$\mu_n\in\Meas_0(\rho,\A)$ such that 
$\inf_{f\in\co(\Bor_n)}\int fd\mu_n
	>
1/n$. 
Define
$m
	=
\sum_n2^{-n}\mu_n
$.
Then,
$m
	\in
\Meas_0(\rho,\A)
	\subset
\Meas_0(\pi,\A)$
and, as a consequence, $L(\Xa,\A)\subset L^1(\Omega,\A,m)$. But
then, if $f>_*0$ we conclude that 
$f\wedge1\in\Bor_0\subset L(\Xa,\A)$,
that $f\wedge1>_*0$ (by \eqref{ge trunc}) and that
$
\int(f\wedge 1)dm
	>
0$
so that $m\in\Meas(\pi,\A)$.
\end{proof}

What the preceding Theorem \ref{th NFM} asserts in words 
is that if an $\A$-complete, arbitrage free market is 
possible under limited market power, it is then possible 
under perfect competition -- i.e. with assets priced by 
their fundamental value. This conclusion does not exclude, 
however, the somewhat paradoxical situation in which 
the only possibility to complete the markets is by admitting 
unlimited market power by financial intermediaries. 
As noted in the introduction, this situation describes the terms 
of a potential conflict between the effort of regulating 
the market power of intermediaries and the support to 
a process of financial innovation that does not disrupt 
market stability by introducing arbitrage opportunities. 

We stress that the conclusions of Theorem \ref{th NFM}
do crucially depend on the intervening $\sigma$ algebra 
$\A$ and that the {\it NFM} condition is the less likely to hold
the finer is $\A$. This remark suggests that a limited
or null market power may not be possible as the degree
of complexity of the financial market, as embodied in
the width of $\A$, ranks high. Actually, given the
requirement $\Neg_*\subset\A$, even the smallest 
possible extension of markets as we defined it may 
not admit any extension satisfying \eqref{NFM}.

Consider, e.g., the case in which an uncountable 
family of possible, alternative scenarios is given. 
In mathematical terms we can model this situation 
via an uncountable, pairwise disjoint collection 
$\{A_\alpha:\alpha\in\mathfrak A\}$ 
of non negligible sets in $\A$. Then, if 
$\rho\in\Ext(\pi,\A)$ and letting 
$f_\alpha
	=
\set{A_\alpha}$, 
it must be that $\rho(f_\alpha)>0$ for each 
$\alpha\in\mathfrak A$.
Thus, upon choosing appropriately $\delta>0$ and 
$\alpha_1,\alpha_2,\ldots\in\mathfrak A$, we obtain
\begin{equation}
\inf_n\rho(f_{\alpha_n})
	>
\delta
\end{equation}
and, consequently,
$\sum_{1\le n\le N}(1/N)\rho(f_{\alpha_n})
	>
\delta$.
On the other hand, by disjointness,
\begin{align*}
\rho\Big(\tfrac1N\sum_{1\le n\le N}f_{\alpha_n}\Big)
	\le
\rho\Big(\tfrac1N\sup_{1\le n\le N}f_{\alpha_n}\Big)
	\le
1/N.
\end{align*}
Thus, in the case under consideration $\ma(\rho)=1$. 
Under the classical probabilistic assumptions, the 
existence of the collection
$\{A_\alpha:\alpha\in\mathfrak A\}$ 
described above is not possible%
\footnote{
In the theory of Boolean algebras the condition that 
no uncountable, pairwise disjoint collection of non 
empty sets may be given, is known as the {\it countable 
chain} (CC) condition and was first formulated by Maharam 
\cite{maharam}. See the comments in \cite{JMAA_2019}.
}.

A stronger form of the preceding example can be
proved.

\begin{lemma}
\label{lemma m=1}
Let the $\sigma$ algebra $\A$ admit an uncountable
collection of non negligible sets, the intersection of 
any two of which is negligible. Then, any extension 
$\rho\in\Ext(\pi,\A)$ is necessarily such that
$\ma(\rho)=1$.
\end{lemma}

\begin{proof}
Using exactly the same notation of the example
above, we obtain that $\rho(f_{\alpha_n})>\delta$
as before while 
$f_{\alpha_n}\wedge f_{\alpha_m}
\sim_*0$
when $n\ne m$. We obtain
\begin{equation}
f_{\alpha_1}\wedge\sum_{n=2}^Nf_{\alpha_n}
\le
\Big(f_{\alpha_1}\wedge\sum_{n=3}^Nf_{\alpha_n}\Big)
+
(f_{\alpha_1}\wedge f_{\alpha_2})
\sim_*
f_{\alpha_1}\wedge\sum_{n=3}^Nf_{\alpha_n}
\end{equation}
i.e., iterating this procedure, 
$f_{\alpha_1}\wedge\sum_{n=2}^Nf_{\alpha_n}
	\sim_*
(f_{\alpha_1}\wedge f_{\alpha_N})
	\sim_*
0$.
But then,
\begin{align}
\sum_{n=1}^Nf_{\alpha_n}
\sim_*
f_{\alpha_1}\vee \sum_{n=2}^Nf_{\alpha_n}
\sim_*
f_{\alpha_1}\vee f_{\alpha_2}\vee \sum_{n=3}^Nf_{\alpha_n}
\sim_*\ldots\sim_*
\bigvee_{n=1}^Nf_{\alpha_n}
\le 1.
\end{align}
But then,
$\rho(\tfrac1N\sum_{1\le n\le N}f_{\alpha_n})
\le
\tfrac1N$
while 
$\tfrac1N\sum_{1\le n\le N}\rho(f_{\alpha_n})
\ge
\delta.$
\end{proof}

Lemma \ref{lemma m=1} shows that the complete 
and competitive financial market imagined by Arrow 
\cite{arrow} may not be feasible in a complex world,
that is with a large enough $\A$.
We have thus an instance in which complexity acts as 
a restriction to perfect competition%
\footnote{
The only theoretical contribution to the study of the 
link between competitive equilibria and complexity 
of which I am aware is the paper of Gale and Sabourian
\cite{gale_sabourian} -- and the references therein.
The authors prove that in a game theoretic, matching 
model rational agents with aversion to complexity end 
up playing a subgame perfect equilibrium which is 
perfectly competitive despite the finite number of 
players. Complexity refers here to strategies, and 
one strategy is considered to be more complex than 
another whenever the two coincide save on a set 
of states on which the former is constant.
}.
The Lemma also provides a negative answer to the
question of the existence of viable price systems,
as defined in section \ref{sec viable}. Essentially
this occurs because in the situation considered
the elements of the positive cone $\K$ defined
in \eqref{K} are too many and too diverse from
one another.

\begin{example}
\label{ex PP}
Let us return to Example \ref{ex EU}. Let each 
agent $\alpha$ in the economy be endowed 
with a probability prior $P_\alpha$, forming 
the collection $\Pred\subset\Prob(\A)$ (thus 
countably additive). Define $\ge_\Pred$ as in 
\eqref{ge PP}. If $\Pred$ is undominated then, 
by \cite[Theorem 3]{JMAA_2016}, there exists 
an uncountable, pairwise disjoint family of 
sets in $\A$ each of which is non negligible 
for some agent $\alpha$ - and thus for the 
economy as a whole. Denote the indicator of
each of these sets by $f_\alpha$. The convex 
cone $\K$ contains then collection 
$\mathcal F
=
\{f_\alpha:\alpha\in\mathfrak A\}$. 
By Lemma \ref{lemma m=1} and Theorem 
\ref{th viable} we know that $\Meas(\pi,\A)=\emp$ 
and that, as a consequence, $\pi$ is not $*$-viable. 
Nevertheless, no agent $\alpha$ will consider the 
entire set $\K$ as relevant for his choices. In fact 
most elements of $\mathcal F$ are negligible from 
his point of view, i.e. relatively to $P_\alpha$. In 
other words, no agent respects the condition 
\eqref{adm monotone} (and a fortiori \eqref{adm cts})
relatively to $\K$. Nevertheless, for each 
$\alpha\in\mathfrak A$ there is a preference system 
of class $\mathbf A_{f_\alpha}$ and thus it is certainly 
true that an equilibrium supporting price $\pi$ must
be locally $*$-viable although it cannot be linear 
(i.e. competitive). This example complements the 
results obtained by Bouchard and Nutz \cite{bouchard_nutz}.
\end{example}

Those extensions that satisfy the {\it NFM} condition 
have further, special mathematical properties. 


\begin{theorem}
\label{th uac}
Let $\rho\in\Pi(L(\Xa,\A))$ satisfy \eqref{NFM}. Then 
there exists $\mu\in\Meas(\rho,\A)$ such that
\begin{equation}
\label{uac}
\lim_{\mu(\abs f)\to0}\rho(\abs f\wedge 1)
	=
0.
\end{equation}
\end{theorem}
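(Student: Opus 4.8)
The plan is to reduce the statement to a finite-intersection/minimax compactness argument of the kind already used in the proof of Theorem~\ref{th M*}. The conclusion \eqref{uac} is equivalent to saying that for every $\varepsilon>0$ there is a $\delta>0$ with $\rho(\abs f\wedge1)<\varepsilon$ whenever $\mu(\abs f)<\delta$; since $\rho(\abs f\wedge1)=\rho(\abs f\wedge1)$ and $\abs f\wedge1\in L(\Xa)$ whenever $f\in L(\Xa)$, and since $\abs f\wedge1\le 1$, it suffices to produce a single $\mu\in\Meas_0(\rho)$ that \emph{dominates} $\rho$ on the order interval $[0,1]\cap L(\Xa)$ in the strong sense that small $\mu$-mass forces small $\rho$-value. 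First I would fix $\varepsilon>0$ and, mimicking the sets $\Bor_n$ from the proof of Theorem~\ref{th M*}, consider the family
\begin{equation*}
\Bor^\varepsilon
	=
\big\{b\in L(\Xa): 1\ge b\ge_*0\text{ and }\rho(b)\ge\varepsilon\big\}
\end{equation*}
and its convex hull $\co(\Bor^\varepsilon)$. The key point, exactly as in Theorem~\ref{th M*}, is that if $f=\sum_i w_i b_i\in\co(\Bor^\varepsilon)$ then $\rho(f)\ge(1-\ma(\rho))\sum_i w_i\rho(b_i)\ge(1-\ma(\rho))\varepsilon>0$, because $\rho$ is subadditive with market power bounded by $\ma(\rho)<1$ and is monotone for $\ge_*$.

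Next I would invoke Lemma~\ref{lemma M}: $\Meas_0(\rho)$ is convex and compact in the $L(\Xa)$-topology, and by \eqref{attain} we have $\rho(f)=\rho^a(f)=\sup_{\mu\in\Meas_0(\rho)}\int f\,d\mu$ for $f\in L(\Xa)\cap\B(\Omega)$ (note every $f\in\co(\Bor^\varepsilon)$ is bounded, lying in $[0,1]$). The function $(\mu,f)\mapsto\int f\,d\mu$ is bilinear, hence concave-convex, and $\co(\Bor^\varepsilon)$ is convex, so Sion's minimax theorem \cite[Corollary 3.3]{sion} applies and yields
\begin{equation*}
0
	<
(1-\ma(\rho))\varepsilon
	\le
\inf_{f\in\co(\Bor^\varepsilon)}\rho(f)
	=
\inf_{f\in\co(\Bor^\varepsilon)}\sup_{\mu\in\Meas_0(\rho)}\int f\,d\mu
	=
\sup_{\mu\in\Meas_0(\rho)}\inf_{f\in\co(\Bor^\varepsilon)}\int f\,d\mu.
\end{equation*}
Hence for each $n\in\N$ there is $\mu_n^\varepsilon\in\Meas_0(\rho)$ with $\inf_{f\in\co(\Bor^\varepsilon)}\int f\,d\mu_n^\varepsilon\ge(1-\ma(\rho))\varepsilon/2$. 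Applying this with $\varepsilon=2^{-k}$, $k\in\N$, collecting the corresponding measures $\mu_{k}$, and setting $\mu=\sum_k 2^{-k}\mu_k$, we obtain $\mu\in\Meas_0(\rho)$ (the set is convex and closed under such countable convex combinations, since $L(\Xa)\subset\B(\Omega)\subset L^1(\mu)$ and cluster points are finitely additive probabilities, as in Lemma~\ref{lemma M}).

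It remains to verify that this single $\mu$ works. Suppose, toward a contradiction, that \eqref{uac} fails for $\mu$: then there is $\varepsilon_0>0$ and a sequence $f_j\in L(\Xa)$ with $\mu(\abs{f_j})\to0$ but $\rho(\abs{f_j}\wedge1)\ge\varepsilon_0$ for all $j$. Put $b_j=\abs{f_j}\wedge1$; by the remark following \eqref{Neg} and property \eqref{ge trunc}, $b_j\in L(\Xa)$, $1\ge b_j\ge_*0$, and $\rho(b_j)\ge\varepsilon_0$, so $b_j\in\Bor^{\varepsilon_0}$. But $\mu(\abs{f_j})\to0$ forces $\int b_j\,d\mu\to0$ (since $0\le b_j\le\abs{f_j}$), and choosing $k$ with $2^{-k}\le\varepsilon_0$ we have $\int b_j\,d\mu\ge 2^{-k}\int b_j\,d\mu_k\ge 2^{-k}(1-\ma(\rho))2^{-k}/2>0$ uniformly in $j$, a contradiction. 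This proves \eqref{uac}.

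The main obstacle I anticipate is the clean application of Sion's theorem: one must confirm that $\co(\Bor^\varepsilon)$ is a subset on which \eqref{attain} actually represents $\rho$ (it does, since these functions are bounded and in $L(\Xa)$, and $\rho=\rho^a$ by the cash-additivity hypothesis) and that the compactness of $\Meas_0(\rho)$ from Lemma~\ref{lemma M} is in a topology for which $\mu\mapsto\int f\,d\mu$ is continuous for the relevant $f$ — which it is, because $f\in L(\Xa)$ and the topology is the one induced by $L(\Xa)$. A secondary subtlety is checking that the countable convex combination $\mu=\sum_k 2^{-k}\mu_k$ genuinely lies in $\Meas_0(\rho)$, i.e. that $L(\Xa)\subset L^1(\mu)$ and $\rho(X)\ge\int X\,d\mu$ for $X\in\Xa$; the first holds because each $g\in L(\Xa)$ is $\ge_*$-dominated by some $\lambda X$, $X\in\Xa$, hence $\mu$-integrable with $\int\abs g\,d\mu\le\lambda\sup_k\int X\,d\mu_k\le\lambda\,\rho(X)<\infty$, and the inequality $\rho(X)\ge\int X\,d\mu$ follows by taking the convex combination of $\rho(X)\ge\int X\,d\mu_k$. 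Everything else is the routine bookkeeping already rehearsed in the proof of Theorem~\ref{th M*}.
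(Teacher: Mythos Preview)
Your argument is correct and takes a genuinely different route from the paper. The paper proceeds by first showing, via a countable-chain-type argument, that the assumption $\ma(\rho)<1$ forces $\Meas_0(\rho)$ to be dominated by one of its elements (invoking \cite[Theorem~2]{JMAA_2019}); it then upgrades domination to \emph{uniform} absolute continuity using weak compactness of $\Meas_0(\rho)$ (via \cite[Theorem~1.3]{zhang}) together with the finitely additive Vitali--Hahn--Saks theorem, and only at the end derives \eqref{uac} from the uniform bound $\sup_{m\in\Meas_0(\rho)}m(A)\to 0$ as $\mu(A)\to 0$. Your proof, by contrast, is a direct rerun of the Sion minimax argument from Theorem~\ref{th M*}: for each level $\varepsilon=2^{-k}$ you extract a $\mu_k\in\Meas_0(\rho)$ giving uniformly positive mass to every $b\in\Bor^{2^{-k}}$, and the countable mixture $\mu=\sum_k2^{-k}\mu_k$ then witnesses \eqref{uac} by an elementary contradiction. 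The trade-off is clear: the paper's approach yields the stronger structural conclusion that \emph{all} of $\Meas_0(\rho)$ is uniformly $\mu$-absolutely continuous (a fact of independent interest, cf.\ the remarks following Theorem~\ref{th L1}), while your approach is more self-contained, avoids the external references, and delivers exactly the stated claim with less machinery. One cosmetic point: since $\ma(\rho)$ in \eqref{mkt power} is defined over $\B(\Omega)_+$ (pointwise positive functions), it is cleaner to take $\Bor^\varepsilon=\{b\in L(\Xa):1\ge b\ge 0,\ \rho(b)\ge\varepsilon\}$; this changes nothing in your contradiction step because $b_j=\abs{f_j}\wedge1\ge 0$ anyway, and it matches the hypothesis needed for the inequality $\rho(\sum_i w_ib_i)\ge(1-\ma(\rho))\sum_i w_i\rho(b_i)$.
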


\begin{proof}
Of course $\abs f\wedge1\in\B(\Omega,\A,\Neg_*)$ 
for each $f\in\Fun{\Omega,\A}$ so that \eqref{uac}
is well defined. For each $\alpha$ in a given 
set $\mathfrak A$, let $\seqn{A^\alpha}$ be a 
decreasing sequence of sets in $\A$ 
satisfying the following properties:
\tiref i
for each distinct pair $\alpha,\beta\in\mathfrak A$ 
there exists $n(\alpha,\beta)\in\N$ such that
\begin{equation}
\label{disjoint}
A^\alpha_n\cap A^\beta_n=\emp
\qquad
n>n(\alpha,\beta)
\end{equation}
and
\tiref{ii}
for each $\alpha\in\mathfrak A$ there exists
$m_\alpha\in\Meas_0(\rho,\A)$ such that 
$\lim_nm_\alpha(A^\alpha_n)>0$. 
If the set $\mathfrak A$ is uncountable, then, as 
in the preceding Lemma \ref{lemma m=1}, we can 
fix $\delta>0$ and extract a sequence 
$\alpha_1,\alpha_2,\ldots\in\mathfrak A$ such 
that, letting $f_n^i=\set{A_n^{\alpha_i}}$,
\begin{equation}
\label{pos}
\inf_{i\in\N}\lim_{n\to+\infty}\rho\big(f^i_n\big)
	>
\delta.
\end{equation}
For each $k\in\N$ define 
$n(k)=1+\sup_{\{i,j\le k:i\ne j\}}n(\alpha_i,\alpha_j)$.
Then $f^1_{n(k)},\ldots,f^k_{n(k)}\in\B(\Omega,\A)$ are 
pairwise disjoint functions with values in $[0,1]$ and
 such that
\begin{equation}
\inf_{1\le i\le k}\rho\big(f^i_{n(k)}\big)
	>
\delta.
\end{equation}
But then, taking $w_i=1/k$, we obtain
\begin{equation}
\sum_{i=1}^k\rho\big(w_if^i_{n(k)}\big)
	>
\delta
\qtext{while}
\rho\Big(\sum_{i=1}^kw_if^i_{n(k)}\Big)
	=
\frac1k\rho\Big(\sum_{i=1}^kf^i_{n(k)}\Big)
	\le
\frac1k
\end{equation}
so that $\ma(\rho)=1$, contradicting our initial 
assumption. We thus reach the conclusion that 
$\mathfrak A$ must be countable and deduce 
from this and from \cite[Theorem 2]{JMAA_2019} 
that $\Meas_0(\rho,\A)$ is dominated by some of its 
elements: let this be $\mu$. In addition, 
$\Meas_0(\rho,\A)$ is weak$^*$ compact as a subset 
of $ba(\A)$, as proved in Theorem \ref{th M}. 
It follows from \cite[Theorem 1.3]{zhang} that 
$\Meas_0(\rho,\A)$ is weakly compact. 

Take a sequence $\seqn E$ in $\A$ such that 
$\mu(E_n)\le2^{-n}$ and for each $n\in\N$
choose $m_n\in\Meas_0(\rho,\A)$ such that
\begin{equation*}
m_n(E_n)
	=
\sup_{m\in\Meas_0(\pi,\A)}m(E_n).
\end{equation*}
Passing to a subsequence if required, we can assume, 
by virtue of the Eberlein-\v{S}mulian Theorem
\cite[V.6.1]{bible}, that $\seqn m$ is weakly convergent 
and so, by the finitely additive version of the Theorem of 
Vitali, Hahn and Saks (see e.g. \cite[Theorem 8.7.4]{rao}), 
that the set $\{m_n:n\in\N\}$ is uniformly absolutely 
continuous with respect to $m_0=\sum_n2^{-n}m_n$. 
However, given that $m_0\in\Meas_0(\pi,\A)$, we also 
have $\mu\gg m_0$ so that $\{m_n:n\in\N\}$ is uniformly
absolutely continuous with respect to $\mu$ as well.
Thus,
\begin{equation}
\lim_{\mu(A)\to0}\sup_{m\in\Meas_0(\rho,\A)}m(A)
	=
0.
\end{equation}
Let $\seqn f$ be a sequence in $L(\Xa,\A)$ that converges
to $0$ in $L^1(\Omega,\A,\mu)$, and therefore in $\mu$ measure.
Then, by \eqref{attain} and for arbitrary $0<c<1$,
\begin{align*}
\lim_n\rho(\abs{f_n}\wedge 1)
	\le
\lim_n\rho\big(\sset{\abs{f_n}>c}\big)
+
c
	=
\lim_n
\sup_{m\in\Meas_0(\rho,\A)}m(\abs{f_n}>c)+c
\end{align*}
so that the claim follows. 
\end{proof}

To highlight the importance of this last claim, we note
that the absolute continuity property for charges, even 
if defined on a $\sigma$ algebra cannot be simply stated 
in terms of null sets. As a consequence, the existence 
of a strictly positive element of $\Meas_0(\pi,\A)$ 
established in Theorem \ref{th NFM}, is not sufficient to 
imply that the set $\Meas_0(\pi,\A)$ is dominated, i.e. that 
each of its elements is absolutely continuous with respect 
to a given one. It rather induces the weaker conclusion that 
there is a given pricing measure $m_0$ such that 
$m_0(A)=0$ implies $m(A)=0$ for all $m\in\Meas_0(\pi,\A)$.

On the other hand, if such a dominating element exists 
then, by weak compactness, it dominates $\Meas_0(\pi,\A)$ 
uniformly. A similar conclusion is not true in the countably 
additive case treated in the traditional approach. In that 
approach, the fact that risk neutral measures are dominated 
is an immediate consequence of the assumption of a 
given, reference probability measure but the set of such
measures is not weakly$^*$ compact when regarded as 
a subset of $ba(\A)$. This 
special feature illustrates a possible advantage of the 
finitely additive approach over the countably additive 
one. 

Eventually, notice that Theorem \ref{th uac} does not 
require the no arbitrage property and may thus be 
adapted to the case in which $L(\Xa,\A)$ is a generic 
vector lattice of functions on $\Omega$ containing 
the bounded functions and $\rho$  a monotonic,
subadditive and cash additive function, such as
the Choquet integral with respect to a sub modular
capacity.

Another characterization of the condition $\Meas(\pi,\A)\ne\emp$
may be obtained as follows:

\begin{theorem}
\label{th L1}
A market $(\Xa,\ge_*,\pi)$ satisfies the condition
$\Meas(\pi,\A)\ne\emp$ if and only if there exists 
$P\in\Prob_{ba}(\A,\Neg_*)$ such that
\begin{equation}
\label{NFLVR(P)}
\Xa\subset L^1(\Omega,\A,P)
\qtext{and}
\cl[L^1(\Omega,\A,P)]{\CapiA}
\cap
\{f\in\Fun{\Omega,\A}:f>_*0\}
=
\emp.
\end{equation}
If, conversely, $P\in\Prob_{ba}(\A,\Neg_*)$ satisfies 
\eqref{NFLVR(P)} then $\Neg_*$ coincides with the 
collection $\Neg_{_P}$ of $P$ null sets.
\end{theorem}

\begin{proof}
If $\mu\in\Meas(\pi,\A)$ then, by definition, $\Xa\subset L^1(\Omega,\A,\mu)$
and $\int fd\mu\le0$ for each 
$f
	\in
\cl[L^1(\Omega,\A,\mu)]{\CapiA}$
which rules out $f>_*0$. Choose $P=\mu$. Conversely, if 
$P\in\Prob_{ba}(\A,\Neg_*)$ satisfies \eqref{NFLVR(P)}
and $h>_*0$, then $h\wedge 1\in L^1(\Omega,\A,P)$ and
$h\wedge 1>_*0$. There exists then a positive and 
continuous linear functional $\phi_h$ on $L^1(\Omega,\A,P)$ 
such that
\begin{equation}
\label{phi_h}
\sup\Big\{\phi_h(f):f\in\cl[L^1(\Omega,\A,P)]{\CapiA}\Big\}
	\le
0
	<
\phi_h(h\wedge 1).
\end{equation}
Given that necessarily $\phi_h(1)>0$, \eqref{phi_h}
remains unchanged if we replace $\phi_h$ by its
normalization so that we can assume $\phi_h(1)=1$.
This implies that $\phi_h\in\Ext_0(\pi,\A)$ and, by 
Theorem \ref{th rep}, that $\phi_h$ admits the 
representation 
\begin{equation}
\phi_h(f)
	=
\int fdm_h
\quad
f\in L^1(\Omega,\A,\mu)
\end{equation}
for some $m_h\in\Meas_0(\pi,\A)$ that satisfies the
inclusion
$L^1(\Omega,\A,P)
	\subset 
L^1(\Omega,\A,m_h)$
and thus such that $m_h\ll P$. But then, if $h=\set A$ 
and $A\notin\Neg_*$, we conclude that $P(A)>0$. 
Moreover, by exploiting the finitely additive version of 
Halmos and Savage theorem, \cite[Theorem 1]{JMAA_2016}, 
we obtain
that the set $\{m_h:h\in\Fun{\Omega,\A}, h>_*0\}$ is dominated
by some $m_0\in\Meas_0(\pi,\A)$. It is then clear
that $m_0(f\wedge 1)>0$ for all $f>_*0$ and thus
that $m_0\in\Meas(\pi,\A)$.
\end{proof}

Implicitly, Theorem \ref{th L1} provides an answer to
the question raised in subsection \ref{subsec rationality}
relative to the conditions under which the ranking
$\ge_*$ takes the form $\ge_P$ for some
$P\in\Prob_{ba}(\A)$.


%

\section{Countably Additive Markets}
\label{sec c.a.}
Given the emphasis on countable additivity which 
dominates the traditional financial literature, it is 
natural to ask if it possible to characterise those 
markets in which the set $\Meas_0(\pi,\A)$ contains 
a countably additive element. A more ambitious 
question is whether such measure is strictly positive, 
i.e. an element on $\Meas(\pi,\A)$. 

Not surprisingly, an exact characterisation may be
obtained by considering the fairly unnatural possibility 
of forming portfolios which invest simultaneously in 
countably many different assets. This induces to 
modify the definition \eqref{market power} into the 
following (again with the convention $0/0=0$):
\begin{equation}
\na(\rho;f_1,f_2,\ldots)
	=
\lim_{k\to+\infty}
\frac
{\sum_{n\le k}\rho(f_n)-\rho\big(\sum_nf_n\big)}
{\sum_{n\le k}\rho(f_n)}
\qquad
\rho\in\Pi\big(L(\Xa,\A)\big)
\end{equation}
for all sequences $f_1,f_2,\ldots\in\B(\Omega,\A)_+$ 
such that $\sum_nf_n\in\B(\Omega,\A)$. 

It may at first appear obvious that, upon buying separately 
each component of a given portfolio, the investment cost 
results higher, but considered more carefully, this is indeed 
correct only if the infinite sum $\sum_n\rho(f_n)$ corresponds 
to an actual cost, i.e only if such a strategy of buying separately
infinitely many assets is feasible on the market.

Define then the functional
\begin{equation}
\na(\rho)
	=
\inf
\na(\rho;f_1,f_2,\ldots)
\end{equation}
where the infimum is computed with respect to all sequences 
in $\B(\Omega,\A)_+$ with bounded sum. Notice that in 
principle, the inequality 
$\na(\rho;f_1,f_2,\ldots)
	\ge
0$
is no longer valid while, of course, 
$\na(\rho)
	\le 
\ma(\rho)$. 

\begin{theorem}
\label{th ca}
Let $\pi\in\Pi_0(\Xa)$. Then: 
\begin{enumerate}[(a).]
\item
$\Meas_0(\pi,\A)\cap\Prob(\A)\ne\emp$
if and only if there exists $\rho\in\Ext_0(\pi,\A)$
such that 
$\na(\rho)>-\infty$ and that
\begin{equation}
\label{bdd}
\sum_n\rho(f_n)
	<
\infty
\qtext{for all}
f_1,f_2,\ldots\in\B(\Omega,\A)_+
\qtext{with}
\sum_nf_n\in\B(\Omega,\A);
\end{equation}
\item
$\Meas(\pi,\A)\cap\Prob(\A)\ne\emp$
if and only if there exists $\rho\in\Ext(\pi,\A)$
such that 
$1>\ma(\rho)\ge\na(\rho)>-\infty$.
\end{enumerate}
\end{theorem}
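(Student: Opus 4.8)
The plan is to obtain the two ``only if'' directions from one elementary observation and the two ``if'' directions by combining a single new step — call it the \emph{$\na_c$-step} — with the machinery of Lemma~\ref{lemma M} and Theorem~\ref{th M*}. For \emph{necessity}, if $m\in\Meas_0(\pi)\cap\Prob_{ca}(\Omega)$, respectively $m\in\Meas(\pi)\cap\Prob_{ca}(\Omega)$, set $\rho(f)=\int f\,dm$ on $L(\Xa)$: this is finite because $L(\Xa)\subset L^1(m)$, it is linear, $\ge_*$-monotone by \eqref{monotone m}, has $\rho(1)=1$ and $\restr\rho\Xa\le\pi$, so $\rho\in\Ext_0(\pi)$; in the second case $\rho(f)\ge\int(f\wedge1)\,dm>0$ whenever $f>_*0$, so $\rho\in\Ext(\pi)$. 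Countable additivity of $m$ and bounded convergence give $\rho\big(\sum_nf_n\big)=\sum_n\rho(f_n)$ for every $f_1,f_2,\dots\in\B(\Omega)_+$ with $\sum_nf_n\in\B(\Omega)$, a finite quantity; hence \eqref{bdd} holds, $\ma_c(\rho;f_1,f_2,\dots)=0$ throughout, and $\ma_c(\rho)=\na_c(\rho)=0$.

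\emph{The $\na_c$-step.} Suppose $\rho\in\Ext_0(\pi)$ satisfies $\na_c(\rho)<+\infty$ and \eqref{bdd}. I claim $\rho$ is sequentially continuous from above at $0$ on $\B(\Omega)_+$, and hence $\Meas_0(\rho)\subset\Prob_{ca}(\Omega)$. Given $h_k\downarrow0$ pointwise with $0\le h_k\le h_1\in\B(\Omega)$, put $f_j=h_j-h_{j+1}\in\B(\Omega)_+$; telescoping gives $\sum_{j\ge1}f_j=h_1\in\B(\Omega)$, so $\sum_j\rho(f_j)<\infty$ by \eqref{bdd}. Since $h_k=\sum_{j\ge k}f_j$, finiteness of $\na_c(\rho)$ applied to the tail sequence $(f_j)_{j\ge k}$ yields $\rho(h_k)\le(1+\na_c(\rho))\sum_{j\ge k}\rho(f_j)\to0$. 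Consequently, for $m\in\Meas_0(\rho)$ and any $A_k\downarrow\emptyset$ we get $0\le m(A_k)=\int\set{A_k}\,dm\le\rho(\set{A_k})\to0$, so $m$ is countably additive.

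\emph{Sufficiency.} For (a), the Hahn--Banach argument of Lemma~\ref{lemma M} (extend $\rho$ to its superhedging functional on $L(\Xa)$, dominate a normalized positive linear functional, invoke \eqref{phi rep}) produces some $m_\phi\in\Meas_0(\rho)\subset\Meas_0(\pi)$, which is countably additive by the $\na_c$-step; thus $\Meas_0(\pi)\cap\Prob_{ca}(\Omega)\neq\emp$. For (b), note first that $\ma_c(\rho)<1$ forces \eqref{bdd} — were $\sum_n\rho(f_n)=+\infty$ for some $f_1,f_2,\dots\in\B(\Omega)_+$ with $\sum_nf_n\in\B(\Omega)$, the ratio defining $\ma_c(\rho;f_1,f_2,\dots)$ would tend to $1$ — and that $\ma_c(\rho)=\ma(\rho)$ (the reverse inequality using $\rho(\sum_nf_n)\ge\rho(\sum_{n\le k}f_n)$), so $\ma(\rho)<1$. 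The $\na_c$-step then gives $\emp\neq\Meas_0(\rho)\subset\Meas_0(\pi)\cap\Prob_{ca}(\Omega)$, with $\Meas_0(\rho)$ convex and compact by Lemma~\ref{lemma M}. Running the proof of Theorem~\ref{th M*}, for each $n$ Sion's minimax applied to $\co(\Bor_n)$, with $\Bor_n=\{b\in\B(\Omega):1\ge b>_*0,\ \rho(b)>1/n\}$, and to $\Meas_0(\rho)$ produces $\mu_n\in\Meas_0(\rho)$ with $\inf_{f\in\co(\Bor_n)}\int f\,d\mu_n\ge(1-\ma(\rho))/2n>0$; setting $m=\sum_n2^{-n}\mu_n$ one has $m\in\Meas_0(\rho)\subset\Meas_0(\pi)$, $m$ is countably additive as a countable convex combination of such, and $\int(f\wedge1)\,dm>0$ for every $f>_*0$ since $f\wedge1\in\Bor_n$ for suitable $n$; thus $m\in\Meas(\pi)\cap\Prob_{ca}(\Omega)$.

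The conceptually new ingredient is the $\na_c$-step: finiteness of $\na_c(\rho)$ together with \eqref{bdd} is precisely what promotes the merely subadditive $\rho$ to a functional continuous from above, so that the pricing measures it dominates become automatically countably additive; everything else is assembling Lemma~\ref{lemma M} and Theorem~\ref{th M*}. The step I expect to require the most care is not any of these but the reduction in (b) to the hypotheses of Theorem~\ref{th M*}: since $\rho\in\Ext(\pi)$ need not be cash additive, one must pass to its cash additive part $\rho^a$ and verify that $\Meas_0(\rho^a)=\Meas_0(\rho)$ and that $\rho^a$ is still an arbitrage-free completion with $\ma(\rho^a)<1$ — the bookkeeping here, rather than the main line of argument, is the delicate point.
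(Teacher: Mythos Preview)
Your argument is correct and matches the paper's approach closely: the paper's key computation for countable additivity is your $\na_c$-step (phrased with disjoint unions of sets rather than decreasing sequences, but the bound $\rho\big(\sum_{j\ge k}f_j\big)\le(1+\na_c(\rho))\sum_{j\ge k}\rho(f_j)$ is the same), after which (a) follows from Lemma~\ref{lemma M} and (b) from Theorem~\ref{th M*}, with your explicit remark that $m=\sum_n2^{-n}\mu_n$ inherits countable additivity being a point the paper leaves implicit. Your flagged concern about cash additivity of $\rho$ in (b) is warranted---the paper simply writes ``claim (b) follows from \ldots\ Theorem~\ref{th M*}'' without addressing that Theorem~\ref{th M*} assumes a cash additive completion---so the delicacy you identify is shared by the paper's own proof.
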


\begin{proof}
Of course any element 
$m
	\in
\Meas_0(\pi,\A)\cap\Prob(\F)$
when considered as a pricing function is an element of
$\Ext_0(\pi,\A)$ such that
$\na(m)
	=
\ma(m)
	=
0$.
This proves necessity for both claims. To prove 
sufficiency, let $\rho\in\Pi_0(L(\Xa,\A))$ and choose 
a sequence $\seqn f$ in $\B(\Omega,\A)_+$ with 
$\sum_nf_n\in\B(\Omega,\A)$.
If $\ma(\rho)<1$, then
\begin{equation}
\sum_n\rho(f_n)
	=
\lim_k\sum_{n\le k}\rho(f_n)
	\le
\frac1{1-\ma(\rho)}\lim_k\rho\Big(\sum_{n\le k}f_n\Big)
	\le
\frac1{1-\ma(\rho)}\rho\Big(\sum_nf_n\Big)
	<
\infty
\end{equation}
so that \eqref{bdd} is satisfied. It is therefore enough to
show that if $\rho$ meets the conditions listed under
\tiref a then $\Meas_0(\rho,\A)\subset\Prob(\A)$.
Choose $m\in\Meas_0(\rho,\A)$ let $\seqn A$ be a 
disjoint sequence in $\A$ and let 
$f_n=\set{A_n}$. We get $\sum_n\rho(f_n)<\infty$
and therefore
\begin{align*}
m\Big(\bigcup_nA_n\Big)
	&=
\sum_nm(A_n)+\lim_km\Big(\bigcup_{n>k}A_n\Big)
	\\&\le
\sum_nm(A_n)+\lim_k\rho\Big(\sum_{n>k}f_n\Big)
	\\&\le
\sum_nm(A_n)+[1-\na(\rho)]\lim_k\sum_{n>k}\rho(f_n)
	\\&=
\sum_nm(A_n).
\end{align*}
Thus, $\Meas_0(\rho,\A)\subset\Prob(\A)$. 
Claim \tiref{b} follows from the preceding remarks and 
Theorem \ref{th NFM}.
\end{proof}

The conditions for the existence of a countably additive
pricing measure listed under \tiref a and \tiref b are
perhaps deceptively simple. They are in fact increasingly
restrictive the finer the $\sigma$ algebra $\A$. With
$\A$ equal to the power set of $\Omega$, virtually 
every sequence of positive and bounded functions
may potentially produce a violation of the condition
$\na(\rho)>-\infty$. To stress this point, we observe 
that such inequality implies that, whenever $\seqn f$ 
is a uniformly bounded sequence of negligible functions, 
then necessarily $\sup_nf_n$ has to be negligible as 
well. An obvious implication is that $\Neg_*$ has to 
be closed with respect to countable unions, a property 
which requires a rather deep reformulation of the axioms 
\eqref{ge} that characterize economic rationality, as embodied into
$\ge_*$. There may well be cases in which such additional 
condition is simply contradictory and which cast doubts on
the economic adequacy of the countably additive paradigm. 
A model in which the sample space $\Omega$ is a separable 
metric space is a good case in point. In fact, if we take 
$\Neg_*$ to consist of sets of first category -- which
would clearly be a good example of what people
consider as negligible -- then, as is well known, 
$\Prob(\A,\Neg_*)
	=
\emp$,
see \cite[Th\'eor\`eme 1]{szpilrajn}.

\appendix

\section{Auxiliary results}

\subsection{The finitely additive integral}
In order for this work to be as self contained as
possible, we recall from \cite{bible} and \cite{rao} 
some basic facts about the finitely additive integral. 
Let then $m\in ba(\A)_+$.

\begin{definition}
$f\in\Fun\Omega$ is said to be integrable with 
respect to $m$ -- in symbols $f\in L^1(\Omega,\A,m)$ 
-- if and only if there exists a sequence 
$\seqn f$ of $\A$ measurable, simple functions such 
that each $f_n$ is integrable, $f_n$ converges to $f$
in $m$-measure and
\begin{equation}
\lim_{m,n}\int\abs{f_n-f_m}dm=0
\end{equation}
\end{definition}

\begin{theorem}
Let $\seqn h$ be a sequence in $L^1(\Omega,\A,m)$
and $h\in\Fun\Omega$. Then $h\in L^1(\Omega,\A,m)$
and $h_n$ converges to $h$ in the norm of 
$L^1(\Omega,\A,m)$ if and only if 
(i)
$h_n$ converges to $h$ in $m$-measure and
(ii)
$\lim_{m(A)\to0}\sup_n\int_A\abs{f_n}dm=0$.
\end{theorem}

\subsection{Integral representation}
Because of its repeated use in several proofs in this 
paper, we restate for ease of reference the following 
general representation Theorem which is just an
adaptation of \cite[Theorem 3.3]{JCA_2018} to the
special case considered in this work.

\begin{theorem}
\label{th rep}
Let $L$ be a vector sublattice of $\Fun{\Omega,\A}$ 
and $\phi$ a positive linear functional defined thereon.
There exists a positive linear functional $\phi^\perp$
on $L$ and a positive set function 
$\lambda\in ba(\Omega,\A)_+$
such that
(i)
$\phi^\perp(f)=0$ for all $f\in L\cap\B(\Omega,\A)$;
(ii)
$L\subset L^1(\Omega,\A,\lambda)$
and
\begin{equation}
\label{phi rep}
\phi(f)
	=
\phi^\perp(f)
+
\int fd\lambda
\qquad
f\in L.
\end{equation}
If $L$ is such that $f\wedge 1\in L$ for each $f\in L$, then
$\phi^\perp=0$ if and only if
\begin{equation}
\label{appr}
\phi(f)
	=
\lim_k\phi\big(f^+\wedge k-f^-\wedge k\big)
\qquad
f\in L.
\end{equation}
\end{theorem}

\begin{proof}
The proof of the existence of the representation 
\eqref{phi rep} for some $\lambda\in ba(\Omega)_+$ 
follows immediately from \cite[Theorem 3.3]{JCA_2018}: 
given that $\phi$ is assumed to be positive, the 
identity map on $L$ is clearly $\phi$ conglomerative 
and, since $L$ is a lattice, directed as well. Given that 
$L$ consists of $\A$ measurable functions, the representing 
charge $\lambda$ may be restricted to $\A$. Denote 
such restriction again by $\lambda$. If $f\wedge 1\in L$ 
for each $f\in L$, then 
$f_k
	=
f^+\wedge k-f^-\wedge k\in L$.
Since $f\in L^1(\Omega,\A,\lambda)$ we obtain from ordinary
results on finitely additive integrals, \cite[III.3.6]{bible},
that $\lim_k\int\abs{f-f_k}d\lambda=0$. Thus if 
$\phi^\perp=0$ then \eqref{appr} holds; conversely, 
\eqref{appr} implies
\begin{align*}
\phi^\perp(f)
	=
\lim_k\Big[\phi(f_k)-\int f_kd\lambda\Big]
	=
\lim_k\phi^\perp(f_k)
	=
0.
\end{align*}
\end{proof}

\BIB{acm}


\begin{thebibliography}{10}

\bibitem{acciaio_et_al}
{\sc Acciaio, B., Beiglb\"ock, M., Penkner, F., and Schachermayer, W.}
\newblock A model-free version of the fundamental theorem of asset pricing and
  the super-replication theorem.
\newblock {\em Math. Finan. 26}, 2 (2016), 233--251.

\bibitem{allen_gale}
{\sc Allen, F., and Gale, D.}
\newblock Arbitrage, short sales, and financial innovation.
\newblock {\em Econometrica 59}, 4 (1991), 1041--1068.

\bibitem{araujo_et_al}
{\sc Araujo, A., Chateauneuf, A., and Faro, J.~H.}
\newblock Pricing rules and {A}rrow-{D}ebreu ambiguous valuation.
\newblock {\em Econ. Theory 49\/} (2012), 1--35.

\bibitem{arrow}
{\sc Arrow, K.~J.}
\newblock The r\^ole of securities in the optimal allocation of risk bearing.
\newblock {\em Rev. Econ. Stud. 31\/} (1964), 91--96.

\bibitem{bewley}
{\sc Bewley, T.~F.}
\newblock Existence of equilibria in economies with infinitely many
  commodities.
\newblock {\em J. Econ. Theory 4}, 3 (1972), 514--540.

\bibitem{rao}
{\sc {Bhaskara Rao}, K. P.~S., and {Bhaskara Rao}, M.}
\newblock {\em Theory of Charges}.
\newblock Academic Press, London, 1983.

\bibitem{biais_glosten_spatt}
{\sc Biais, B., Glosten, L., and Spatt, C.}
\newblock Market microstructure: A survey of microfoundations, empirical
  results, and policy implications.
\newblock {\em J. Finan. Mar. 8}, 2 (2005), 217 -- 264.

\bibitem{bisin}
{\sc Bisin, A.}
\newblock General equilibrium with endogenously incomplete financial markets.
\newblock {\em J. Econ. Theory 82\/} (1998), 19--45.

\bibitem{bouchard_nutz}
{\sc Bouchard, B., and Nutz, M.}
\newblock Arbitrage and duality in nondominated discrete-time models.
\newblock {\em Ann. Appl. Probab. 25}, 2 (2015), 823--859.

\bibitem{burzoni_riedel_soner}
{\sc Burzoni, M., Riedel, F., and Soner, M.}
\newblock Viability and arbitrage under {K}nightian uncertainty.
\newblock {\em Econometrica\/} (forthcoming), 1--32.

\bibitem{MAFI_2008}
{\sc Cassese, G.}
\newblock Asset pricing with no exogenous probability measure.
\newblock {\em Math. Finance 18}, 1 (2008), 23--54.

\bibitem{JMAA_2016}
{\sc Cassese, G.}
\newblock The theorem of {H}almos and {S}avage under finite additivity.
\newblock {\em J. Math. Anal. Appl. 437\/} (2016), 870--881.

\bibitem{ECTH_2017}
{\sc Cassese, G.}
\newblock Asset pricing in an imperfect world.
\newblock {\em Econ. Th. 64\/} (2017), 539--570.

\bibitem{JCA_2018}
{\sc Cassese, G.}
\newblock Conglomerability and the representation of linear functionals.
\newblock {\em J. Convex Anal. 25\/} (2018), 789--815.

\bibitem{JMAA_2019}
{\sc Cassese, G.}
\newblock Control measures on {B}oolean algebras.
\newblock {\em J. Math. Anal. Appl. 478}, 2 (2019), 764--775.

\bibitem{conway}
{\sc Conway, J.~B.}
\newblock {\em A Course in Functional Analysis}.
\newblock Springer-Verlag, Berlin, 1990.

\bibitem{davis_hobson}
{\sc Davis, M. H.~A., and Hobson, D.~G.}
\newblock The range of traded option prices.
\newblock {\em Math. Finance 17}, 1 (2007), 1--14.

\bibitem{delbaen_schachermayer}
{\sc Delbaen, F., and Schachermayer, W.}
\newblock A general version of the fundamental theorem of asset pricing.
\newblock {\em Math. Ann. 300\/} (1994), 463--520.

\bibitem{dellacherie_meyer_A}
{\sc Dellacherie, C., and Meyer, P.-A.}
\newblock {\em Probabilities and Potential {A}}.
\newblock North-Holland, Amsterdam, 1978.

\bibitem{bible}
{\sc Dunford, N.~J., and Schwartz, J.~T.}
\newblock {\em Linear Operators. {P}art {I}}.
\newblock Wiley and Sons, New York, 1988.

\bibitem{el-karoui_ravanelli}
{\sc El-Karoui, N., and Ravanelli, C.}
\newblock Cash subadditive risk measures and interest rate ambiguity.
\newblock {\em Math. Finan. 19}, 4 (2009), 561--590.

\bibitem{epstein_ji}
{\sc Epstein, L.~G., and Ji, S.}
\newblock Ambiguous volatility and asset pricing in continuous time.
\newblock {\em Rev. Finan. Stud. 26}, 7 (2013), 1740--1786.

\bibitem{gale_sabourian}
{\sc Gale, D., and Sabourian, H.}
\newblock Complexity and competition.
\newblock {\em Econometrica 73}, 3 (2005), 739--769.

\bibitem{harrison_kreps}
{\sc Harrison, M.~J., and Kreps, D.~M.}
\newblock Martingales and arbitrage in multiperiod securities markets.
\newblock {\em J. Econ. Theory 20}, 3 (1979), 381--408.

\bibitem{jarrow_protter_shimbo}
{\sc Jarrow, R.~A., Protter, P., and Shimbo, K.}
\newblock Asset price bubbles in complete markets.
\newblock In {\em Advances in Mathematical Finance}, M.~C. Fu, R.~A. Jarrow,
  J.-Y.~J. Yen, and R.~J. Elliott, Eds. Birkh{\"a}user, Boston, MA, 2007,
  pp.~97--121.

\bibitem{jouini_kallal_99}
{\sc Jouini, E., and Kallal, H.}
\newblock Viability and equilibrium in securities markets with frictions.
\newblock {\em Math. Finance 9}, 3 (1999), 275--292.

\bibitem{kreps}
{\sc Kreps, D.~M.}
\newblock Arbitrage and equilibrium in economies with infinitely many
  commodities.
\newblock {\em J. Math. Econ. 8\/} (1981), 15--35.

\bibitem{loewenstein_willard_et}
{\sc Loewenstein, M., and Willard, G.~A.}
\newblock Local martingales, arbitrage, and viability. {F}ree snacks and cheap
  thrills.
\newblock {\em Econ. Theory 16}, 1 (2000), 135 -- 161.

\bibitem{maharam}
{\sc Maharam, D.}
\newblock An algebraic characterization of measure algebras.
\newblock {\em Ann. Math. 48\/} (1947), 154--167.

\bibitem{mas-colell}
{\sc Mas-Colell, A.}
\newblock The price equilibrium existence problem in topological vector
  lattices.
\newblock {\em Econometrica 54}, 5 (1986), 1039--1053.

\bibitem{mas-colell_zame}
{\sc Mas-Colell, A., and Zame, W.}
\newblock Equilibrium theory in infinite dimensional spaces.
\newblock In {\em Handbook of Mathematical Economics}, W.~Hildenbrand and
  H.~Sonnenschein, Eds., 1~ed., vol.~4. Elsevier, 1991, ch.~34, pp.~1835--1898.

\bibitem{radner}
{\sc Radner, R.}
\newblock Existence of equilibrium of plans, prices, and price expectations in
  a sequence of markets.
\newblock {\em Econometrica 40}, 2 (1972), 289--303.

\bibitem{riedel}
{\sc Riedel, F.}
\newblock Finance without probabilistic prior assumptions.
\newblock {\em Dec. Econ. Financ. 38\/} (2015), 75--91.

\bibitem{schmeidler_89}
{\sc Schmeidler, D.}
\newblock Subjective probability and expected utility without additivity.
\newblock {\em Econometrica 57\/} (1989), 571--587.

\bibitem{sion}
{\sc Sion, M.}
\newblock On general minimax theorems.
\newblock {\em Pacific J. Math. 8\/} (1958), 171--175.

\bibitem{szpilrajn}
{\sc Szpilrajn-Marczewski, E.}
\newblock Remarques sur les fonctions compl\`{e}tement additives d'ensemble et
  sur les ensembles jouissant de la propri\'{e}t\'{e} de {B}aire.
\newblock {\em Fund. Math. 22\/} (1934), 303--311.

\bibitem{zhang}
{\sc Zhang, X.-D.}
\newblock On weak compactness in spaces of measures.
\newblock {\em J. Func. Anal. 143\/} (1997), 1--9.

\end{thebibliography}
\end{document}